\providecommand{\U}[1]{\protect\rule{.1in}{.1in}}
\newtheorem{theorem}{Theorem}[section]
\newtheorem{corollary}[theorem]{Corollary}
\newtheorem{definition}[theorem]{Definition}
\newtheorem{remark}[theorem]{Remark}
\newtheorem{lemma}[theorem]{Lemma}
\newtheorem{hypothesis}[theorem]{Hypothesis}
\numberwithin{equation}{section}
\begin{document}
\title[KdV equation]{Norming constants of embedded bound states and bounded positon solutions of
the Korteweg-de Vries equation }
\author{Alexei Rybkin}
\address{Department of Mathematics and Statistics, University of Alaska Fairbanks, PO
Box 756660, Fairbanks, AK 99775}
\email{arybkin@alaska.edu}
\thanks{The author is supported in part by the NSF grant DMS-2009980. The author would
also like to thank the Isaac Newton Institute for Mathematical Sciences for
support and hospitality during the programme Dispersive Hydrodynamics when
work on this paper was completed (EPSRC Grant Number EP/R014604/1).}
\date{March, 2023}
\subjclass{34L25, 37K15, 47B35}
\keywords{Schrodinger operator, embedded eigenvalues, binary Darboux transformation, KdV equation.}

\begin{abstract}
In the context of the full line Schrodinger equation, we revisit the binary
Darboux transformation (double commutation method) which inserts or removes
any number of positive eigenvalues embedded into the absolutely continuous
spectrum without altering the rest of scattering data. We then show that
embedded eigenvalues produce an additional explicit term in the KdV solution.
This term looks similar to multi-soliton solution and describes waves
traveling in the direction opposite to solitons. It also resembles the known
formula for (singular) multi-positon solutions but remains bounded, which
answers in the affirmative Matveev's question about existence of bounded positons.

\end{abstract}
\dedicatory{Dedicated to the memory of Sergey Naboko, my teacher and friend}\maketitle

\section{Introduction}

We are concerned with the inverse scattering problem for the full line
Schrodinger operator $\mathbb{L}_{q}=-\partial_{x}^{2}+q\left(  x\right)  $ in
the presence of embedded eigenvalues (i.e. positive eigenvalues in the
continuous spectrum) and understanding how such eigenvalues affect solutions
to the initial value problem for the Korteweg-de Vries (KdV) equation%
\begin{equation}%
\begin{array}
[c]{cc}%
\partial_{t}u-6u\partial_{x}u+\partial_{x}^{3}u=0, & -\infty<x<\infty
,\ t\geq0,\\
u\left(  x,0\right)  =q\left(  x\right)  . &
\end{array}
\label{KdV}%
\end{equation}
If $q\left(  x\right)  =O\left(  \left\vert x\right\vert ^{-2-\varepsilon
}\right)  $ as $x\rightarrow\pm\infty$ (short-range) then the classical
inverse scattering transform (IST) yields essentially all the information
about the solution one could ask for. However, if $q\left(  x\right)
=O\left(  \left\vert x\right\vert ^{-2}\right)  $ then the classical IST is no
longer well-defined in general as the standard scattering data no longer
define the potential uniquely \cite{ADM81}. Note that if $q\left(  x\right)
=O\left(  \left\vert x\right\vert ^{-2-\varepsilon}\right)  $ at $+\infty$ but
quite arbitrary at $-\infty$ then a "right sided" IST still
works\footnote{This is a strong manifestation of the unidirectional nature of
KdV.} allowing to study KdV solutions with such initial data (see our recent
\cite{GryRybBLMS20} and the literature cited therein). As it was shown by
Naboko \cite{Naboko87} slower than $q\left(  x\right)  =O\left(  \left\vert
x\right\vert ^{-1}\right)  $ may produce dense singular spectrum filling
$\left(  0,\infty\right)  $ leaving any hope that a suitable IST can include
such a situation. The main concern of our note is to develop the IST for those
cases of \emph{Wigner-von Neumann} type of initial data%
\begin{equation}
q\left(  x\right)  =\left(  A/x\right)  \sin2\omega x+O\left(  x^{-2}\right)
,\ \ \ \left\vert x\right\vert \rightarrow\infty, \label{wvn}%
\end{equation}
that produce only finitely many \emph{embedded bound states\ (}and no other
positive singular spectrum). It is important that Wigner-von Neumann
potentials are in $L^{2}$ and due to the seminal Bourgain's result
\cite{Bourgain93} (\ref{KdV}) remains well-posed.

In our recent work \cite{RybPosBS} we use $L^{2}$ well-posedness to treat a
specific case of Wigner-von Neumann type of initial data that gives a hint for
how IST\ may be extended shading some light on Vladimir Matveev's proposal
\cite{MatveevOpenProblems}: "A very interesting unsolved problem is to study
the large time behavior of the solutions to the KdV equation corresponding to
the smooth initial data like $cx^{-1}\sin2kx$, $c\in\mathbb{R}$", "The related
inverse scattering problem is not yet solved and the study of the related
large times evolution is a very challenging problem".

We recall that Wigner-von Neumann potentials were introduced as examples of
quantum mechanical potentials that produce embedded eigenvalues (i.e. embedded
into continuous spectrum). In the present paper, we concentrate on
understanding the general effect of embedded eigenvalues on inverse scattering
problem and KdV solutions. We show that to restore well-posedness of IST\ the
classical scattering data need to be supplemented with embedded bound state
data which are similar to that of negative bound states but come from a
different type of singularity, embedded real poles of Jost solutions (also
known as \emph{resonances} or \emph{spectral singularities}). The main new
feature is an (explicit) extra term in the KdV solution that accounts for
embedded eigenvalues and resembles the well-known multisoliton solution
\cite{MarchBook2011} (see also \cite{RybSAM21}). In the literature (see e.g.
\cite{Mat02}) such solutions are commonly referred to as\emph{\ positon}
(since they correspond to positive eigenvalues) but only singular (double
pole) positons are currently known. In fact, Matveev has repeatedly asked
\cite{Mat02} if bounded (non-singular) positons exist. We offer an explicit
construction of such solutions which should yield precise description of how
positons interact with each other, as well as with solitons and the
background. Our analysis is based on the \emph{binary Darboux transformation}
(see e.g. \cite{GuetalBook05,MatveevSalle91}), also known as the \emph{double
commutation method} (see e.g. \cite{Eastham1982,GesztTeschl96}), but we rely
on the new approach to it put forward in our recent \cite{RybSAM21} which is
particularly well-suited to the IST setting. We refer the reader to Section
\ref{main results} for more discussions, historical comments, and literature accounts.

We emphasize that we deal with a new type of coherent KdV structure associated
with initial data that support zero transmission at positive
energies\footnote{At such points the reflection coefficient is unimodular
(full reflection).}. Such a point gives rise to a spectral singularity which
order determines main features of the KdV solution. In our recent paper
\cite{GruRybNON22} we show that if its order is less than $1/2$ then, in fact,
there are no interesting \ features to report on. In the context of Wigner-von
Neumann initial data (\ref{wvn}) it is the case when the ratio $\gamma:=$
$\left\vert A\right\vert /4\omega<1/2$. In this paper we consider order $1$
spectral singularities. Such singularities are generated, for example, by
(\ref{wvn}) with $\gamma=1$. (Recall that such singularities are also referred
to as resonances.) We are still far from the complete solution of Matveev's
problem. But we now have a tool to turn an order one singularity into an
embedded eigenvalue and show that the new initial profile does generate a new
distinct feature, a (bounded) positon. On the other hand, it is well-known
that an embedded eigenvalue (bound state) is the result of a very complicated
process of coherent reflections causing its instability (see e.g.
\cite{CruzSampedroetal2002}). For this reason there is unfortunately no easy
(if any) way to tell initially a resonances from an embedded eigenvalue.
However, under the KdV flow, over time, an embedded eigenvalue reveals itself
(as a soliton does). The quantitative analysis of this phenomenon is very
nontrivial and still work in progress.

Through the paper, we make the following notational agreement. The bar denotes
the complex conjugate. Matrices (including rows and columns) are denoted by
boldface letters. For instance, $\mathbf{x}=\left(  x_{n}\right)  $ is the row
with entries $x_{n}$. Prime stands for the x-derivative and $W\left(
f,g\right)  =fg^{\prime}-f^{\prime}g$ is the Wronskian. We write $f\left(
x\right)  \sim g\left(  x\right)  ,x\rightarrow x_{0}$ (finite or infinite) if
$f\left(  x\right)  -g\left(  x\right)  \rightarrow0,x\rightarrow x_{0}$. The
only function space we need is the standard $L^{p}\left(  S\right)  $ with
$p=1,2$ with the convention $L^{p}:=L^{p}\left(  \mathbb{R}\right)  $,
$L^{p}\left(  \pm\infty\right)  =L^{p}\left(  a,\pm\infty\right)  $ with any
finite $a$. If $f\left(  z\right)  $ is analytic in some domain $D$ of the
complex plane, we call a boundary point $z_{0}$ an \emph{embedded simple pole
}if $z_{0}$ is a non-isolated singularity and $\left(  z-z_{0}\right)
f\left(  z\right)  $ tends to a finite limit $c\neq0$ as $z\rightarrow z_{0}$
non-tangentially. We then denote $c=\operatorname*{Res}_{z_{0}}f$. Continuity
at a point means continuity in some neighborhood of the point. Finally,
$\operatorname{Im}f\left(  z_{0}\right)  =\left(  \operatorname{Im}f\right)
\left(  z_{0}\right)  $ and the same agreement of course applies to the real
part $\operatorname{Re}$.

The paper is organized as follows. In Section \ref{background info} we fix our
terminology and introduce our main ingredients. In Section \ref{main results}
we state and prove the theorem on embedding eigenvalues into continuous
spectrum and discuss how it addresses some open problems. In Section
\ref{Main results 2} we give our theorem on paring embedded bound states. In
the final section \ref{Sect: example} we work out an explicit example
illustrating our main results.

\section{Our framework and main ingredients\label{background info}}

In this section we briefly review the necessary material and introduce our
main ingredients. Let%
\begin{equation}
\mathbb{L}_{q}=-\partial_{x}^{2}+q\left(  x\right)  \label{Lq}%
\end{equation}
denote the full line \emph{Schrodinger operator} with a real potential
$q\left(  x\right)  $. That is, we assume that $\mathbb{L}_{q}$ can be defined
as a selfadjoint operator on $L^{2}$. We agree to retain the same notation
$\mathbb{L}_{q}$ for a differential expression defined by (\ref{Lq}).
Occasionally we also consider half-line versions of $\mathbb{L}_{q}$. Through
the rest of the paper we assume the following basic conditions:

\begin{hypothesis}
\label{hyp1.1}$q$ is a real locally integrable function on $\mathbb{R}$
subject to

\begin{enumerate}
\item[(1)] the operator $\mathbb{L}_{q}$ is semibounded below;

\item[(2)] the equation $\mathbb{L}_{q}u=k^{2}u$ has a solution $\psi\left(
x,k\right)  $ subject for a.e. $\operatorname{Im}k=0$ to%
\begin{equation}
\psi(x,k)\sim\mathrm{e}^{\mathrm{i}kx},\psi^{\prime}(x,k)\sim\mathrm{i}%
k\mathrm{e}^{\mathrm{i}kx}\text{, }x\rightarrow+\infty.\text{ (right Jost
solution)} \label{Jost sltns}%
\end{equation}

\end{enumerate}
\end{hypothesis}

Hypothesis \ref{hyp1.1} covers a large class of step-type potentials, i.e.
potentials decaying (but not necessarily short-range) at $+\infty$ but
essentially arbitrary at $-\infty$. In our \cite{GruRybSIMA15,GryRybBLMS20} we
develop the IST for the KdV equation assuming a short range decay at $+\infty$
in place of condition (2). (See also Subsections \ref{Weyl} and
\ref{Reflection coeff}.)

\subsection{Weyl solution\label{Weyl}}

Since some of the material of this subsection is not quite mainstream in the
integrable systems community, for the reader's convenience we go over some
basics of \emph{Titchmarsh-Weyl theory}. We follow a modern exposition of this
theory given in \cite[Chapter 9]{TeschlBOOK} adapting it to our setting.

\begin{definition}
[Weyl solution]A real locally integrable potential $q\left(  x\right)  $ is
said to be \emph{Weyl limit point} at $\pm\infty$ if the \emph{Schr\"{o}dinger
equation}%
\begin{equation}
\mathbb{L}_{q}u=-u^{\prime\prime}+q\left(  x\right)  u=\lambda u,\ \ \ x\in
\mathbb{R}, \label{schrodinger eq}%
\end{equation}
has a unique (up to a multiplicative constant) solution that is in $L^{2}%
(\pm\infty)$ for each $\lambda\in\mathbb{C}^{+}$. Solutions $\Psi_{\pm
}(x,\lambda)$ are called the right/left \emph{Weyl solution }respectively.
\end{definition}

The concept of a Weyl solution is fundamental to the spectral theory of
Schrodinger (Sturm-Liouville) operators in dimension one due to the fact that
its uniqueness is equivalent to the selfadjointness of $\mathbb{L}_{q}$ on
$L^{2}\left(  a,\pm\infty\right)  $ with a Dirichlet (or any other
selfadjoint) boundary condition at $x=a\pm0$, $a$ is any finite number.

There is no criterion for the limit point case in terms of $q$ but there are
convenient sufficient conditions which are typically satisfied in realistic
situations. For instance, if $q$ is essentially bounded below,%
\[
\sup_{a\in\mathbb{R}}\int_{a}^{a+1}\max\left\{  -q\left(  x\right)
,0\right\}  \mathrm{d}x<\infty,
\]
then it is in the limit point case at both $\pm\infty$. Thus, $\mathbb{L}_{q}$
with such $q$ is selfadjoint on $L^{2}$. In fact, if the quadratic form
$\left\langle \mathbb{L}_{q}f,f\right\rangle \geq c\left\Vert f\right\Vert
^{2}$ with some finite $c$ for any $f$ \ from a dense subset of $L^{2}$ then
$\mathbb{L}_{q}$ is selfadjoint and its spectrum $\operatorname*{Spec}%
\mathbb{L}_{q}$ is bounded below by $c$. Hence $\mathbb{L}_{q}$ is also in the
limit point case at both $\pm\infty$ . Thus, the condition 1 of Hypothesis
\ref{hyp1.1} implies that $q$ is limit point at both $\pm\infty$. Also, if $q$
obeys the condition 2 of Hypothesis \ref{hyp1.1} then the right Weyl solution
$\Psi_{+}\left(  x,\lambda\right)  $ can be chosen to satisfy $\Psi_{+}\left(
x,k^{2}\right)  =\psi\left(  x,k\right)  $, where $\psi$ is the right Jost
solution (\ref{Jost sltns}). Note that $\Psi_{+}\left(  x,\lambda\right)  $ is
a function of energy $\lambda$ whereas $\psi\left(  x,k\right)  $ is a
function of momentum $k$ ($\lambda=k^{2}$).

In this connection we emphasize that the Weyl solution is a family of
solutions different by a multiple $\alpha\left(  \lambda\right)  $. The
logarithmic derivative though%
\begin{equation}
m_{\pm}\left(  \lambda,a\right)  =\pm\frac{\Psi_{\pm}^{\prime}\left(
a\pm0,\lambda\right)  }{\Psi_{\pm}\left(  a\pm0,\lambda\right)  }%
,\ \ \ \lambda\in\mathbb{C}^{+}, \label{m-funct}%
\end{equation}
is clearly independent of the choice of $\Psi_{\pm}$, and is known as the
right/left \emph{Titchmarsh-Weyl m-function} (or just m-function for short).

It should be quite apparent that without loss of generality we can discuss
only the right half-line case. Unless otherwise stated for the rest of the
subsection we conveniently abbreviate%
\[
\Psi=\Psi_{+},\ \ \ m\left(  \lambda\right)  =m_{+}\left(  \lambda,0\right)
.
\]

The function $m\left(  \lambda\right)  $ is analytic mapping $\mathbb{C}^{+}$
to $\mathbb{C}^{+}$ (a Herglotz function) and hence admits the Herglotz
representation%
\[
m\left(  \lambda\right)  =c+\int_{\mathbb{R}}\left(  \frac{1}{s-\lambda}%
-\frac{s}{1+s^{2}}\right)  \mathrm{d}\mu\left(  s\right)  ,\ \ c\in
\mathbb{R},
\]
with some positive measure $\mu$ subject to $%
{\displaystyle\int_{\mathbb{R}}}
\dfrac{\mathrm{d}\mu\left(  s\right)  }{1+s^{2}}<\infty$. It is a fundamental
fact of Titchmarsh-Weyl theory that $\mu$ coincides with the spectral measure
of $\mathbb{L}_{q}^{D}$, the Schrodinger operator on $L^{2}\left(
\mathbb{R}_{+}\right)  $ with a Dirichlet boundary condition $u\left(
+0\right)  =0$. Note that $E$ is an eigenvalue of $\mathbb{L}_{q}^{D}$ iff
$m\left(  E+\mathrm{i}\varepsilon\right)  $ has a pole type singularity as
$\varepsilon\rightarrow+0$.

The m-function $m$ introduced by (\ref{main results}) is also known as
Dirichlet or principal. However we will also need the Neumann m-function
$m_{0}$ defined by
\begin{equation}
m_{0}\left(  \lambda\right)  =-\Psi\left(  0,\lambda\right)  /\Psi^{\prime
}\left(  0,\lambda\right)  =-1/m\left(  \lambda\right)  . \label{Nuemann}%
\end{equation}
It is a Heglotz function and its representing measure is the spectral measure
of $\mathbb{L}_{q}^{N}$, the Schrodinger operator on $L^{2}\left(
\mathbb{R}_{+}\right)  $ with a Neumann boundary condition $u^{\prime}\left(
+0\right)  =0$. If we normalize $\Psi$ to satisfy%
\begin{equation}
\Psi\left(  x,\lambda\right)  =c\left(  x,\lambda\right)  +m_{0}\left(
\lambda\right)  s\left(  x,\lambda\right)  , \label{big psi}%
\end{equation}
where $c\left(  x,\lambda\right)  ,s\left(  x,\lambda\right)  $ are solutions
of $\mathbb{L}_{q}u=\lambda u$ on $\mathbb{R}_{+}$ satisfying%
\[
c\left(  0,\lambda\right)  =1,c^{\prime}\left(  0,\lambda\right)  =0;s\left(
0,\lambda\right)  =0,s^{\prime}\left(  0,\lambda\right)  =1,
\]
then (see e.g. \cite[Lemma 9.14]{TeschlBOOK}) for $\lambda\in\mathbb{C}^{+}$
\begin{equation}
\int_{0}^{\infty}\left\vert \Psi\left(  x,\lambda\right)  \right\vert
^{2}\mathrm{d}x=\frac{\operatorname{Im}m_{0}\left(  \lambda\right)
}{\operatorname{Im}\lambda}. \label{square norm}%
\end{equation}

We now have all ingredients to prove the following important statement.

\begin{lemma}
\label{lemma on Weyl sltin} Let $\mathbb{L}_{q}$ be selfadjoint on $L^{2}$ and
$\Psi\left(  x,\lambda\right)  $ a right Weyl solution. If $E$ is a real
number such that:

\begin{enumerate}
\item $E$ $>\inf\operatorname*{Spec}\mathbb{L}_{q}$;

\item equation $\mathbb{L}_{q}u=Eu$ has a real solution $u_{E}\left(
x\right)  $ square integrable at $+\infty$;

\item $\lim_{\varepsilon\rightarrow+0}\Psi\left(  x,E+\mathrm{i}%
\varepsilon\right)  =:\Psi\left(  x,E+\mathrm{i}0\right)  $ exists and finite;
\end{enumerate}

then $u_{E}\left(  x\right)  $ and $\Psi\left(  x,E+\mathrm{i}0\right)  $ are
linearly dependent.
\end{lemma}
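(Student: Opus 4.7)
The plan is to show that, under the hypotheses, $u_E(x)$ and $u_E(0)\Psi(x,E+\mathrm{i}0)$ share identical Cauchy data at $x=0$, so that linear dependence follows by uniqueness for the Schr\"odinger equation. The non-trivial step is to prove that $m_0(E+\mathrm{i}0)=u_E'(0)/u_E(0)$; I would do this via a Lagrange-identity argument.

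As a preliminary reduction, I would extract from hypothesis (3), together with the representation (\ref{big psi}) and the entire dependence of $c(x,\lambda)$ and $s(x,\lambda)$ on $\lambda$, the existence of a finite boundary value $m_0(E+\mathrm{i}0):=\lim_{\varepsilon\to 0^{+}}m_0(E+\mathrm{i}\varepsilon)$: choose any $x_0$ with $s(x_0,E)\neq 0$ and solve (\ref{big psi}) for $m_0$. The same finiteness excludes $u_E(0)=0$, for otherwise $u_E$ would be a Dirichlet eigenfunction of the half-line problem at energy $E$, producing a point mass in the associated spectral measure and forcing $|m_0(E+\mathrm{i}\varepsilon)|\to\infty$.

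The core step is the Lagrange identity. Setting $\Psi_\varepsilon(x):=\Psi(x,E+\mathrm{i}\varepsilon)$, the equations $\mathbb{L}_q u_E=Eu_E$ and $\mathbb{L}_q\Psi_\varepsilon=(E+\mathrm{i}\varepsilon)\Psi_\varepsilon$ yield
\begin{equation*}
\frac{\mathrm{d}}{\mathrm{d}x}\,W(u_E,\Psi_\varepsilon)=-\mathrm{i}\varepsilon\,u_E\,\Psi_\varepsilon.
\end{equation*}
Integrating from $0$ to $+\infty$, the boundary term at $+\infty$ vanishes: both $u_E$ and $\Psi_\varepsilon$ are $L^2(+\infty)$ together with their images under $\mathbb{L}_q$, so they lie in the maximal domain on $[0,\infty)$, and the limit-point property at $+\infty$ (guaranteed by Hypothesis \ref{hyp1.1}(1) via semiboundedness) forces $W(u_E,\Psi_\varepsilon)(x)\to 0$ as $x\to\infty$. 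Cauchy--Schwarz combined with (\ref{square norm}) then gives
\begin{equation*}
|W(u_E,\Psi_\varepsilon)(0)|\leq \varepsilon\,\|u_E\|_{L^2(+\infty)}\,\|\Psi_\varepsilon\|_{L^2(+\infty)}=\|u_E\|_{L^2(+\infty)}\sqrt{\varepsilon\,\operatorname{Im} m_0(E+\mathrm{i}\varepsilon)}\longrightarrow 0,
\end{equation*}
the last convergence following from the boundedness of $m_0(E+\mathrm{i}\varepsilon)$ near $\varepsilon=0$ established above.

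On the other hand, the normalization (\ref{big psi}) evaluates $W(u_E,\Psi_\varepsilon)(0)=u_E(0)\,m_0(E+\mathrm{i}\varepsilon)-u_E'(0)$, whose limit is $u_E(0)\,m_0(E+\mathrm{i}0)-u_E'(0)$. Equating the two expressions gives $m_0(E+\mathrm{i}0)=u_E'(0)/u_E(0)$, i.e. the coincidence of Cauchy data of $u_E$ and $u_E(0)\Psi(\cdot,E+\mathrm{i}0)$ at $x=0$, so these two solutions of $\mathbb{L}_q u=Eu$ are identical. The chief obstacle I anticipate is the vanishing of the Wronskian at $+\infty$: it requires the Titchmarsh--Weyl characterization of the limit-point case as ``$W(f,g)(x)\to 0$ for all maximal-domain $f,g$'', invoked for $f=u_E$ (real, at energy $E$) and $g=\Psi_\varepsilon$ (complex, at energy $E+\mathrm{i}\varepsilon$); once that is in hand, the rest is a direct Cauchy--Schwarz estimate together with ODE uniqueness.
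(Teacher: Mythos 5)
Your overall machinery is the same as the paper's: the Lagrange identity (\ref{Wr id}) integrated out to $+\infty$ (with the boundary term killed by the limit-point property), the norm identity (\ref{square norm}), and Cauchy--Schwarz to force a Wronskian to vanish in the limit $\varepsilon\to+0$. The genuine gap is in your exclusion of $u_{E}\left(  0\right)  =0$, and it is not a side issue --- it is the heart of the matter. If $u_{E}\left(  0\right)  =0$ then $E$ is an eigenvalue of $\mathbb{L}_{q}^{D}$, which places a point mass in the representing measure of the \emph{Dirichlet} m-function $m$; hence $m\left(  E+\mathrm{i}\varepsilon\right)$ blows up and, by (\ref{Nuemann}), $m_{0}=-1/m$ tends to $0$ --- a perfectly finite boundary value. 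Your claim that this situation forces $\left\vert m_{0}\left(  E+\mathrm{i}\varepsilon\right)  \right\vert \rightarrow\infty$ is exactly backwards, so finiteness of $m_{0}\left(  E+\mathrm{i}0\right)$ does not rule out $u_{E}\left(  0\right)  =0$. Worse, you never use Condition 1, whereas the paper uses it (via Sturm comparison) to show that $u_{E}$ \emph{must} vanish somewhere, and nothing in the hypotheses prevents that node from sitting at the origin. The paper's proof embraces precisely this case: it translates the origin to a zero of $u_{E}$, where $m_{0}\left(  E+\mathrm{i}0\right)  =0$ and $\operatorname{Im}m_{0}\left(  E+\mathrm{i}\varepsilon\right)  /\varepsilon$ stays bounded, identifies $\Psi_{0}\left(  x,E+\mathrm{i}0\right)$ with $c\left(  x,E\right)$ via Fatou, and only then runs the Wronskian argument.

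In your scheme the case $u_{E}\left(  0\right)  =0$ is fatal: the target identity $m_{0}\left(  E+\mathrm{i}0\right)  =u_{E}^{\prime}\left(  0\right)  /u_{E}\left(  0\right)$ is meaningless, and the boundary term $W\left(  u_{E},\Psi_{\varepsilon}\right)  \left(  0\right)  =u_{E}\left(  0\right)  m_{0}\left(  E+\mathrm{i}\varepsilon\right)  -u_{E}^{\prime}\left(  0\right)$ tends to $-u_{E}^{\prime}\left(  0\right)  \neq0$, which your Cauchy--Schwarz bound would force to vanish, i.e. $u_{E}\equiv0$ --- a signal that at such a point at least one of your inputs (the boundedness you claimed, or the applicability of (\ref{square norm}) to your $\Psi_{\varepsilon}$) must break down. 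Two smaller points: extracting a finite $m_{0}\left(  E+\mathrm{i}0\right)$ from hypothesis (3) by ``solving (\ref{big psi})'' tacitly requires $\Psi\left(  0,E+\mathrm{i}0\right)  =\alpha\left(  E+\mathrm{i}0\right)  \neq0$, since the given Weyl solution differs from the normalized $\Psi_{0}$ by the factor $\alpha\left(  \lambda\right)$; and if you try to repair the argument by moving the origin to a point where $u_{E}\neq0$, you then collide with the possibility $u_{E}^{\prime}=0$ there (a Neumann eigenvalue, where $m_{0}$ genuinely does blow up), so the anchoring point cannot be chosen freely. You need the paper's normalization at a node of $u_{E}$, or an argument that handles all three cases of the Cauchy data at the chosen origin.
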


\begin{proof}
Condition 1 implies that $u_{E}\left(  x\right)  $ has at least one zero (the
Sturm comparison theorem). Without loss of generality we assume that it is
$0$. That is $u_{E}\left(  0\right)  =0$. Due to Condition 2, $u_{E}\in
L^{2}\left(  \mathbb{R}_{+}\right)  $ and hence $E$ is an eigenvalue of
$\mathbb{L}_{q}^{D}$ on $L^{2}\left(  \mathbb{R}_{+}\right)  $. This means
that the Dirichlet m-function $m\left(  E+\mathrm{i}\varepsilon\right)  $ has
a pole type singularity as $\varepsilon\rightarrow+0$ and hence, due to
(\ref{Nuemann}), the Neumann m-function $m_{0}\left(  E+\mathrm{i}%
\varepsilon\right)  $ vanishes linearly as $\varepsilon\rightarrow+0$. Let
$\Psi_{0}$ denote the Weyl solution subject to (\ref{big psi}). It follows
from (\ref{square norm}) that
\[
\int_{0}^{\infty}\left\vert \Psi_{0}\left(  x,E+\mathrm{i}\varepsilon\right)
\right\vert ^{2}\mathrm{d}x=\frac{\operatorname{Im}m_{0}\left(  E+\mathrm{i}%
\varepsilon\right)  }{\varepsilon}.
\]
Therefore, we must have%
\begin{equation}
\int_{0}^{\infty}\left\vert \Psi\left(  x,E+\mathrm{i}\varepsilon\right)
\right\vert ^{2}\mathrm{d}x\sim C>0,\ \ \ \varepsilon\rightarrow+0.
\label{square norm 2}%
\end{equation}
But since $c\left(  x,\lambda\right)  ,s\left(  x,\lambda\right)  $ are entire
functions in $\lambda$ and $m_{0}\left(  \lambda\right)  $ has
(nontangentional) boundary values a.e. on $\mathbb{R}$, it follows from
(\ref{big psi}) that boundary values of $\Psi_{0}$ are well-defined and%
\begin{align*}
\Psi_{0}\left(  x,E+\mathrm{i}0\right)   &  =c\left(  x,E+\mathrm{i}0\right)
+m_{0}\left(  E+\mathrm{i}0\right)  s\left(  x,E+\mathrm{i}0\right) \\
&  =c\left(  x,E\right)  .
\end{align*}
By the Fatou lemma we conclude that
\[
\int_{0}^{\infty}\left\vert \Psi_{0}\left(  x,E+\mathrm{i}0\right)
\right\vert ^{2}\mathrm{d}x=\int_{0}^{\infty}\left\vert c\left(  x,E\right)
\right\vert ^{2}\mathrm{d}x\leq C.
\]
Thus $\Psi_{0}\left(  x,E+\mathrm{i}0\right)  \in L^{2}\left(  \mathbb{R}%
_{+}\right)  $. By the well-known (and easily verifiable) Wronskian identity:%
\begin{equation}
W^{\prime}\left(  f_{\lambda},f_{\mu}\right)  =\left(  \lambda-\mu\right)
f_{\lambda}f_{\mu}, \label{Wr id}%
\end{equation}
where $f_{\lambda}$ denotes a solution to $\mathbb{L}_{q}u=\lambda u$, one has%
\begin{equation}
W\left(  \Psi_{0}\left(  x,E+\mathrm{i}\varepsilon\right)  ,u_{E}\left(
x\right)  \right)  \mathbf{=}\mathbf{-}\mathrm{i}\varepsilon\int_{x}^{\infty
}\Psi_{0}\left(  s,E+\mathrm{i}\varepsilon\right)  u_{E}\left(  s\right)
\mathrm{d}s. \label{wrons}%
\end{equation}
By taking in (\ref{wrons}) $\varepsilon\rightarrow+0$, one immediately
concludes from that that
\[
W\left(  \Psi_{0}\left(  x,E+\mathrm{i}0\right)  ,u_{E}\left(  x\right)
\right)  =0
\]
if we show that the integral in (\ref{wrons}) stays bounded. The latter
follows from%
\begin{align*}
&  \lim_{\varepsilon\rightarrow+0}\left\vert \int_{x}^{\infty}\Psi_{0}\left(
s,E+\mathrm{i}\varepsilon\right)  u_{E}\left(  s\right)  \mathrm{d}%
s\right\vert ^{2}\\
&  \leq\lim_{\varepsilon\rightarrow+0}\int_{x}^{\infty}\left\vert \Psi
_{0}\left(  s,E+\mathrm{i}\varepsilon\right)  \right\vert ^{2}\mathrm{d}%
s\cdot\int_{x}^{\infty}u_{E}\left(  s\right)  ^{2}\mathrm{d}s\\
&  \leq\lim_{\varepsilon\rightarrow+0}\int_{0}^{\infty}\left\vert \Psi
_{0}\left(  s,E+\mathrm{i}\varepsilon\right)  \right\vert ^{2}\mathrm{d}%
s\cdot\int_{0}^{\infty}u_{E}\left(  s\right)  ^{2}\mathrm{d}s\\
&  <\infty.
\end{align*}
It remains to notice that, as Weyl solutions, $\Psi$ and $\Psi_{0}$ differ by
a multiple $\alpha\left(  \lambda\right)  $. That is, $\Psi\left(
x,\lambda\right)  =\alpha\left(  \lambda\right)  \Psi_{0}\left(
x,\lambda\right)  $ for any $x$ and hence
\[
\alpha\left(  \lambda\right)  =\Psi\left(  0,\lambda\right)  /\Psi_{0}\left(
0,\lambda\right)  =\Psi\left(  0,\lambda\right)  ,
\]
as by (\ref{big psi}) $\Psi_{0}\left(  0,\lambda\right)  =1$. Since, by
Condition 3, $\Psi_{0}\left(  x,E+\mathrm{i}0\right)  $ is well-defined, so is
$\alpha\left(  E+\mathrm{i}0\right)  $. Thus%
\[
W\left(  \Psi\left(  x,E+\mathrm{i}0\right)  ,u_{E}\left(  x\right)  \right)
=\alpha\left(  E+\mathrm{i}0\right)  W\left(  \Psi_{0}\left(  x,E+\mathrm{i}%
0\right)  ,u_{E}\left(  x\right)  \right)  =0,
\]
which concludes the proof.
\end{proof}

In what follows $E$ is a priori embedded into continuous spectrum and hence
Condition 1 will be satisfied.

\subsection{Reflection coefficient\label{Reflection coeff} \cite{GruRybSIMA15}%
}

From now on, we assume Hypothesis \ref{hyp1.1} which lets us take the right
Jost solution $\psi\left(  x,k\right)  $ defined by (\ref{Jost sltns}) as the
right Weyl solution $\Psi_{+}\left(  x,k^{2}\right)  $ suitable for us.
Namely, we set
\[
\Psi_{+}\left(  x,k^{2}\right)  =\psi_{+}\left(  x,k\right)  =\psi\left(
x,k\right)  .
\]
We choose the left Weyl solution $\Psi_{-}\left(  x,k^{2}\right)  $, denote it
by $\varphi\left(  x,k\right)  $, to satisfy%
\begin{equation}
\varphi(x,k)=\overline{\psi(x,k)}+R(k)\psi(x,k),\;\text{ (\emph{basic
scattering relation})} \label{basic scatt identity}%
\end{equation}
for a.e. real $k$ with some $R\left(  k\right)  $ called the
(right)\emph{\ reflection coefficient}. Equation (\ref{basic scatt identity})
is explained below. Thus%
\[
\Psi_{-}\left(  x,k^{2}\right)  =\varphi\left(  x,k\right)
\]
where $\varphi$ is subject to (\ref{basic scatt identity}).

Note that condition (2) of Hypothesis \ref{hyp1.1} assumes some decay at
$+\infty$ and implies two important facts:

\begin{enumerate}
\item As it immediately follows from (\ref{Jost sltns}),%
\begin{equation}
W(\overline{\psi(x,k)},\psi\left(  x,k\right)  )=2\mathrm{i}k
\label{Wronskian}%
\end{equation}
and hence the pair $\{\psi,\overline{\psi}\}$ forms a fundamental set for
(\ref{schrodinger eq}). This means that (\ref{basic scatt identity}) is
nothing but an elementary fact saying that any solution is a linear
combination of fundamental solutions.

\item It follows form (\ref{basic scatt identity}) that%
\begin{equation}
R\left(  k\right)  =-\frac{W(\varphi\left(  x,k\right)  ,\overline{\psi
}\left(  x,k\right)  )}{W(\varphi\left(  x,k\right)  ,\psi\left(  x,k\right)
)} \label{R}%
\end{equation}
is well-defined for a.e. real $k$ and $R\left(  -k\right)  =\overline{R\left(
k\right)  }$, $\left\vert R\left(  k\right)  \right\vert \leq1$.
\end{enumerate}

\subsection{Diagonal Green's function \cite{TeschlBOOK}}

If $q\in L^{1}\left(  +\infty\right)  $ then the Jost solution exists for any
$k\neq0$. Slower decay may give rise to real singularities of $\psi(x,k)$. The
adequate object to deal with such singularities is the \emph{diagonal Green's
function }of $\mathbb{L}_{q}$ defined as%
\begin{equation}
g\left(  k^{2},x\right)  =\frac{\psi_{+}\left(  x,k\right)  \psi_{-}\left(
x,k\right)  }{W\left(  \psi_{+}\left(  x,k\right)  ,\psi_{-}\left(
x,k\right)  \right)  }=-\frac{\varphi\left(  x,k\right)  \psi\left(
x,k\right)  }{2\mathrm{i}k}, \label{g}%
\end{equation}
the last equation being due to (\ref{Wronskian}). The importance of $g$ is due to

\begin{enumerate}
\item it is analytic in $k^{2}$ from $\mathbb{C}^{+}$ to $\mathbb{C}^{+}$;

\item its poles (necessarily real), both isolated and embedded, are
eigenvalues of $\mathbb{L}_{q}$;

\item the potential $q\left(  x\right)  $ can be found from%
\begin{equation}
g\left(  -\kappa^{2},x\right)  \sim1-q\left(  x\right)  /2\kappa
^{2},\ \ \ \kappa\rightarrow+\infty. \label{q via G}%
\end{equation}

\end{enumerate}

\subsection{Norming constants}

Recall that if (\ref{schrodinger eq}) also has a left Jost solution $\psi
_{-}\left(  x,k\right)  $ (i.e., subject to $\psi_{-}\left(  x,k\right)
\sim\mathrm{e}^{-\mathrm{i}kx}$)\ then $\varphi\left(  x,k\right)  =T\left(
k\right)  \psi_{-}\left(  x,k\right)  $ where $T\left(  k\right)  $ is called
the \emph{transmission coefficient}. It follows from
(\ref{basic scatt identity}) that $T\left(  k\right)  =2$\textrm{$i$%
}$k/W\left(  \psi_{-},\psi\right)  $ meaning that $T\left(  k\right)  $ is
meromorphic in $\mathbb{C}^{+}$ with simple poles (if any) $\left\{
\mathrm{i}\kappa_{n}\right\}  ,$ $\kappa_{n}>0$, and $k^{2}=-\kappa_{n}^{2}$
are the isolated poles of $g\left(  k^{2},x\right)  $, i.e. negative bound
states of $\mathbb{L}_{q}$. Since $R\left(  k\right)  $ in general is only
defined on the real line, one needs to include pole information in the set of
scattering data. It can be done via the relation%
\begin{equation}
\operatorname*{Res}_{k=\mathrm{i}\kappa_{n}}\varphi\left(  x,k\right)
=\mathrm{i}c_{n}^{2}\psi\left(  x,\mathrm{i}\kappa_{n}\right)  \text{,
(\emph{isolated pole condition})} \label{isolated pole cond}%
\end{equation}
where positive $c_{n}^{2}$, called the (right) \emph{norming constant} of
bound state $-\kappa_{n}^{2}$, must be specified.

As was discussed, slower decay of $q$ at $+\infty$ may give rise to
\emph{resonances} (also known as \emph{spectral singularities}), i.e. real
points $\pm\omega_{n}$ where $\psi\left(  x,k\right)  $, the other factor in
(\ref{g}), shows a blow up behavior. To the best of own knowledge only
Wigner-von Neumann resonances are relatively well-understood \cite{Klaus91}.
In general $\psi\left(  x,k\right)  $ may blow up to any order. We however
restrict our attention to the case $\psi\left(  x,k\right)  =O\left(  \left(
k-\omega_{n}\right)  ^{-1}\right)  $, $k\rightarrow\omega_{n}$, i.e.
$\omega_{n}$ is an embedded simple pole\footnote{The case of arbitrary order
singularities is technically more difficult and is still work in progress.}.
Since $g\left(  k^{2},x\right)  $ may only have a simple embedded pole,
$\varphi\left(  x,\omega_{n}\right)  $ is then well-defined. If $\varphi
\left(  x,\omega_{n}\right)  \neq0$ then $\omega_{n}^{2}$ is an embedded bound
state. As we show in \cite{RybPosBS}, the reflection coefficient $R\left(
k\right)  $ alone can not tell if a resonance is a bound state or not.
Therefore an extra condition is required. Using (\ref{isolated pole cond}) as
a pattern to follow, we set%
\begin{equation}
\operatorname*{Res}_{k=\omega_{n}}\psi\left(  x,k\right)  =\frac
{\mathrm{i}\alpha_{n}^{2}}{R\left(  \omega_{n}\right)  }\varphi\left(
x,\omega_{n}\right)  \text{ (\emph{embedded pole condition})}
\label{embedded pole cond}%
\end{equation}
with some $\alpha_{n}^{2}>0$ which we call the \emph{norming constant} of
embedded bound state $\omega_{n}^{2}$. The reason for putting an extra
$R\left(  \omega_{n}\right)  $ will be clear later. We shall see that
(\ref{embedded pole cond}) indeed works.

\subsection{Gauge transformation}

This is our last (but not least) ingredient.

\begin{lemma}
[on gauge transformation]\label{lemma}If $\varphi\left(  x,k\right)  $ and
$\psi\left(  x,k\right)  $ are related by (\ref{basic scatt identity}) then so
are%
\begin{equation}%
\begin{array}
[c]{c}%
\widetilde{\varphi}\left(  x,k\right)  =\varphi\left(  x,k\right)  +%
{\displaystyle\sum\limits_{n}}
a_{n}\left(  x\right)  W\left(  \varphi\left(  x,k\right)  ,f_{n}\left(
x,k\right)  \right) \\
\widetilde{\psi}\left(  x,k\right)  =\psi\left(  x,k\right)  +%
{\displaystyle\sum_{n}}
a_{n}\left(  x\right)  W\left(  \psi\left(  x,k\right)  ,f_{n}\left(
x,k\right)  \right)
\end{array}
\label{guage}%
\end{equation}
for any real $a_{n}\left(  x\right)  $ and $f_{n}\left(  x,k\right)  $ real
for real $k$.
\end{lemma}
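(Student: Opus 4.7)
My plan is to verify the scattering relation $\widetilde{\varphi}=\overline{\widetilde{\psi}}+R\widetilde{\psi}$ by direct computation, exploiting the three structural facts that (i) $a_{n}(x)$ is real, (ii) $f_{n}(x,k)$ is real for real $k$, so $\overline{f_{n}(x,k)}=f_{n}(x,k)$, and (iii) the Wronskian is bilinear, in particular $\overline{W(u,f_{n})}=W(\overline{u},f_{n})$ whenever $f_{n}$ is real. Taking the complex conjugate of the second line of (\ref{guage}) will therefore give
\begin{equation*}
\overline{\widetilde{\psi}(x,k)}=\overline{\psi(x,k)}+\sum_{n}a_{n}(x)\,W\bigl(\overline{\psi(x,k)},f_{n}(x,k)\bigr).
\end{equation*}

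Next I would add $R(k)\widetilde{\psi}(x,k)$ to this and collect the Wronskian terms. Since $W(\,\cdot\,,f_{n})$ is linear in its first slot, the sum becomes
\begin{equation*}
\overline{\widetilde{\psi}}+R\widetilde{\psi}=\bigl(\overline{\psi}+R\psi\bigr)+\sum_{n}a_{n}(x)\,W\bigl(\overline{\psi}+R\psi,\,f_{n}\bigr)=\varphi+\sum_{n}a_{n}(x)\,W(\varphi,f_{n})=\widetilde{\varphi},
\end{equation*}
where in the penultimate equality I used the hypothesis (\ref{basic scatt identity}) to collapse $\overline{\psi}+R\psi$ to $\varphi$ inside both the leading term and the Wronskian. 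This is precisely the scattering relation for the tilded pair with the same reflection coefficient $R(k)$.

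There is no real obstacle here; the content of the lemma is the observation that the gauge-type correction is designed to be compatible with the basic scattering relation because it is built out of Wronskians with real ``kernel'' functions $f_{n}$. The only thing worth flagging is a mild regularity caveat: the computation is purely algebraic and pointwise in $k$, so it is valid for every real $k$ at which $\varphi(x,k)$, $\psi(x,k)$, $\overline{\psi(x,k)}$, and $R(k)$ are all defined and the basic scattering relation holds, i.e., for a.e.\ real $k$. If one wished to go further and also claim that $\widetilde{\varphi},\widetilde{\psi}$ are solutions of a common Schr\"odinger equation (as in the binary Darboux/double commutation setup), one would additionally need to impose that each $f_{n}(x,k)$ solves $\mathbb{L}_{q}u=k^{2}u$ and to use the Wronskian identity (\ref{Wr id}); but for the statement as written this is not needed.
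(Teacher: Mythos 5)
Your computation is correct and is exactly the argument the paper has in mind: the paper dismisses the proof as "a direct consequence of the bi-linearity of the Wronskian," and your conjugation step plus linearity of $W(\cdot,f_{n})$ in the first slot is precisely that. Nothing is missing.
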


The proof is by a direct consequence of the bi-linearity of the Wronskian and
completely trivial. We will apply this lemma with a very specific choice of
$f_{n}\left(  x,k\right)  $. The name 'gauge' (but not the transformation) is
taken from the recent \cite{BilmanMiller2019} where such transformations are
crucially used in the context of matrix \emph{Riemann-Hilbert problem}
associated with the focusing NLS. We however learned about them from the
recent \cite{GM19} where it is used in a way similar to
\cite{BilmanMiller2019} but in the mKdV setting. Note that the form
(\ref{guage}) is very different from those of \cite{BilmanMiller2019,GM19}.

\section{Inserting embedded eigenvalues\label{main results}}

In this section we state, prove, and discuss the following

\begin{theorem}
[turning resonances into embedded eigenvalues]\label{MainThm}Assume Hypothesis
\ref{hyp1.1} and suppose that

1. (Resonance condition) for $\omega_{n}^{2}>0$, $1\leq n\leq N<\infty$,
$\mathbb{L}_{q}u=\omega_{n}^{2}u$ has a unique (up to a scalar multiple)
$L^{2}\left(  -\infty\right)  $ solution;

2. (Continuity condition) the (right) Jost solution $\psi\left(  x,k\right)  $
and the (right) reflection $R\left(  k\right)  $ coefficient are continuous at
each $k=\omega_{n}$.

Let
\[
\mathbf{A}=\left(  \alpha_{n}\right)  =\left(
\begin{array}
[c]{cccc}%
\alpha_{1} & \alpha_{2} & ... & \alpha_{N}%
\end{array}
\right)
\]
be a row-vector of arbitrary real nonzero numbers (norming constants)
and\footnote{Where the root is chosen with a cut along $\left(  -\infty
,0\right)  $}%
\begin{equation}
\boldsymbol{\Phi}\left(  x\right)  :=\left(  \phi_{n}\left(  x\right)
\right)  ,\ \ \ \phi_{n}\left(  x\right)  :=2\operatorname{Re}\left[  R\left(
\omega_{n}\right)  ^{1/2}\psi\left(  x,\omega_{n}\right)  \right]  .
\label{fn}%
\end{equation}
Then

\begin{itemize}
\item $\phi_{n}\left(  x\right)  \in$ $L^{2}\left(  -\infty\right)  $ (hence
$\boldsymbol{\Phi}\left(  x\right)  \in L^{2}\left(  -\infty\right)  $) and therefore

\item the (square) matrix $\mathbf{G}_{+}\left(  x\right)  $ given by%
\begin{equation}
\mathbf{G}_{+}\left(  x\right)  :=\mathbf{A}\left[  \int_{-\infty}%
^{x}\boldsymbol{\Phi}\left(  s\right)  ^{T}\boldsymbol{\Phi}\left(  s\right)
\mathrm{d}s\right]  \mathbf{A}^{T}\text{ (the Gram matrix)} \label{gram mat}%
\end{equation}
is well-defined and (clearly) positive semi-definite;

\item the potential%
\begin{equation}
q_{+N}\left(  x\right)  =q\left(  x\right)  -2\partial_{x}^{2}\log\det\left(
\mathbf{I}+\mathbf{G}_{+}\left(  x\right)  \right)  , \label{q+N}%
\end{equation}
supports embedded bound states (eigenvalues) at $\omega_{n}^{2}$ $(1\leq n\leq
N)$;

\item the associated (orthogonal in $L^{2}$) eigenfunctions $\left(
y_{n}\left(  x\right)  \right)  $ can be (uniquely) found from the linear
system%
\begin{equation}
\mathbf{y}\left(  \mathbf{I}+\mathbf{G}_{+}\left(  x\right)  \right)
=-\mathbf{A}^{T}\boldsymbol{\Phi}\left(  x\right)  ,\mathbf{y}:=\left(
y_{n}\right)  . \label{linear sys}%
\end{equation}

\end{itemize}
\end{theorem}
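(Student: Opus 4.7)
The strategy is the binary Darboux / double commutation method executed in the gauge-transformation form of the author's \cite{RybSAM21}, with $\mathbf{G}_+$ encoding the $N$-fold iteration.

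First I would establish $\phi_n \in L^{2}(-\infty)$ by identifying it with a real multiple of the left Weyl solution at energy $\omega_n^{2}$. The continuity hypothesis makes $\psi(x,\omega_n)$ and $R(\omega_n)$ well defined, and (\ref{basic scatt identity}) at $k = \omega_n$ reads $\varphi(x,\omega_n) = \overline{\psi(x,\omega_n)} + R(\omega_n)\psi(x,\omega_n)$. The resonance condition combined with the left-sided analog of Lemma \ref{lemma on Weyl sltin} makes $\varphi(x,\omega_n)$ the unique-up-to-scalar $L^{2}(-\infty)$ solution. Since $\overline{\varphi(x,\omega_n)}$ is then also $L^{2}(-\infty)$ and solves the same real eigenvalue problem, $\overline\varphi = c\varphi$ with $|c| = 1$; comparing the coefficients of $\psi$ and $\overline{\psi}$ in (\ref{basic scatt identity}) and its conjugate forces $c = \overline{R(\omega_n)}$, hence $|R(\omega_n)| = 1$ and $R(\omega_n)^{-1/2} = \overline{R(\omega_n)^{1/2}}$. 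Multiplying (\ref{basic scatt identity}) by $R(\omega_n)^{-1/2}$ yields $\phi_n(x) = R(\omega_n)^{-1/2}\varphi(x,\omega_n) \in L^{2}(-\infty)$, and absolute convergence of the entries of $\mathbf{G}_+$ is immediate.

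Next I would apply the gauge Lemma \ref{lemma} with $f_n(x,k) := \phi_n(x)$ and an appropriate row vector $a_n(x)$ built from $(\mathbf{I}+\mathbf{G}_+)^{-1}$, and verify that the resulting $\widetilde\psi,\widetilde\varphi$ solve $\mathbb{L}_{q_{+N}}u = k^{2}u$ with $q_{+N}$ exactly as in (\ref{q+N}). The driving identity is the specialization $\partial_x W(\psi(x,k),\phi_n) = (k^{2} - \omega_n^{2})\psi(x,k)\phi_n$ of (\ref{Wr id}), which ties the $x$-derivatives of the gauge terms to entries of $\mathbf{G}_+'$; a Jacobi-formula manipulation of $\partial_x\det(\mathbf{I}+\mathbf{G}_+)$ then identifies the resulting shift in potential with $-2\partial_x^{2}\log\det(\mathbf{I}+\mathbf{G}_+)$. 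The embedded eigenfunctions are the standard binary-Darboux quantities, which in matrix form are precisely (\ref{linear sys}); verifying $\mathbb{L}_{q_{+N}}y_n = \omega_n^{2}y_n$ reduces to a direct matrix-differential calculation after differentiating (\ref{linear sys}) twice, using $\phi_n'' = (q - \omega_n^{2})\phi_n$ and $[\mathbf{G}_+']_{mn} = \alpha_m\alpha_n\phi_m\phi_n$, and invoking $(\log\det\mathbf{M})' = \operatorname{tr}(\mathbf{M}^{-1}\mathbf{M}')$ with $\mathbf{M} := \mathbf{I}+\mathbf{G}_+$. $L^{2}(-\infty)$-membership follows because $\mathbf{G}_+(x) \to 0$ as $x \to -\infty$ gives $y_n \sim -\alpha_n\phi_n \in L^{2}(-\infty)$, and $L^{2}$-orthogonality of distinct $y_n$ is then standard from selfadjointness of $\mathbb{L}_{q_{+N}}$.

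The principal obstacle is $L^{2}(+\infty)$-membership of $y_n$, since $\phi_n$ generally grows at $+\infty$. The critical cancellation generalises the scalar identity $\partial_x(1 - 1/(1+\alpha^{2}\!\int\phi^{2})) = \alpha^{2}\phi^{2}/(1+\alpha^{2}\!\int\phi^{2})^{2}$: from $\partial_x(\mathbf{I} - \mathbf{M}^{-1}) = \mathbf{M}^{-1}\mathbf{M}'\mathbf{M}^{-1}$ one reads off that the trace of the right-hand side is exactly $\sum_n y_n^{2}$. Integrating over $\mathbb{R}$ and using $\mathbf{M}(-\infty) = \mathbf{I}$ together with $\mathbf{M}(+\infty) \geq \mathbf{I}$ yields
\[
\sum_n \int_{-\infty}^{+\infty} y_n(x)^{2}\,\mathrm{d}x = N - \operatorname{tr}(\mathbf{M}^{-1}(+\infty)) \leq N,
\]
providing the required uniform $L^{2}$ bound and completing the proof.
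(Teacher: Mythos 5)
Your proposal is correct, and its skeleton coincides with the paper's: identify $\phi_n$ with $R(\omega_n)^{-1/2}\varphi(x,\omega_n)$ via Lemma \ref{lemma on Weyl sltin} to get $L^{2}(-\infty)$ membership, run the gauge transformation (\ref{guage}) with $f_n=\phi_n/(k^{2}-\omega_n^{2})$, impose the embedded pole condition to arrive at (\ref{linear sys}), and extract $q_{+N}$ by Jacobi's formula. Two of your sub-arguments genuinely differ from the paper's. First, you obtain $\left\vert R(\omega_n)\right\vert=1$ by writing $\overline{\varphi(\cdot,\omega_n)}=c\,\varphi(\cdot,\omega_n)$ and matching coefficients against the conjugated scattering relation (legitimate since $W(\overline{\psi},\psi)=2\mathrm{i}\omega_n\neq0$), whereas the paper derives it from the embedded pole of the left $m$-function $m_-(\cdot,a_n)$; both are sound. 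Second, and more substantively, for $L^{2}(+\infty)$ membership of the $y_n$ the paper uses the pointwise asymptotics (\ref{cos}) to get $g_{nn}=O(x)$, hence $\left\Vert(\mathbf{I}+\mathbf{G}_+)^{-1}\right\Vert=O(x^{-1})$ and $y_n=O(1/x)$, while your trace identity $\partial_x\operatorname{tr}(\mathbf{I}-\mathbf{M}^{-1})=\operatorname{tr}(\mathbf{M}^{-1}\mathbf{M}'\mathbf{M}^{-1})=\mathbf{y}\mathbf{y}^{T}$ (valid because $\mathbf{M}'$ is the rank-one matrix $\boldsymbol{\phi}^{T}\boldsymbol{\phi}$ and $\mathbf{M}$ is symmetric) yields the cleaner quantitative bound $\sum_n\left\Vert y_n\right\Vert^{2}=N-\operatorname{tr}\mathbf{M}^{-1}(+\infty)\leq N$ with no asymptotics at all, generalizing the paper's remark that $\left\Vert y\right\Vert=1$ when $N=1$. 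The trade-off is that your route forgoes the pointwise decay $y_n=O(1/x)$, which the paper reuses to establish $\psi_{+N}=\psi+O(1/x)$ at $+\infty$, to compute $W(\varphi_{+N},\psi_{+N})=2\mathrm{i}k$, and hence to read the embedded eigenvalues off the poles of the diagonal Green's function; since you instead verify $\mathbb{L}_{q_{+N}}y_n=\omega_n^{2}y_n$ directly and appeal to selfadjointness, your architecture does not need that decay, but you should still record (as follows from (\ref{cos})) that the perturbation $-2\partial_x^{2}\log\det(\mathbf{I}+\mathbf{G}_+)$ is bounded, so that $\mathbb{L}_{q_{+N}}$ remains semibounded and limit point at both ends and an $L^{2}$ solution of the eigenvalue equation is genuinely an eigenfunction.
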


Before proceeding with the proof, note that the class of potentials satisfying
the conditions of Theorem \ref{MainThm} is quite large. Indeed, as was
discussed above, Hypothesis \ref{hyp1.1} requires only mild decay at $+\infty$
and general behavior at $-\infty$. Condition 1 is readily satisfied if on the
left half line $q\left(  x\right)  $ behaves as a sum of $N$ Wigner-von
Neumann type potentials (\ref{wvn}) with all $\gamma$'s greater than $1/2$.
This is a classical fact known since at least the earlier 50s (see, e.g.
\cite{Eastham1982}). Condition 2 is a bit more subtle. In section
\ref{Sect: example} we give specific examples with $\gamma=1$ that produce
analyticity (not just continuity) in condition 2. These examples and some
considerations of \cite{Klaus91} suggest a broad class of (long-range)
potentials that guarantees condition 2 (work in progress).

\begin{proof}
We start with constructing a suitable pair $\varphi\left(  x,k\right)
,\psi\left(  x,k\right)  $ of Weyl solutions for the original potential $q$ at
$\mp\infty$ respectively. The candidate for $\psi\left(  x,k\right)  $ is
obvious, the right Jost solution. As in subsection \ref{Reflection coeff} we
define the Weyl solution at $-\infty$ by (\ref{basic scatt identity}). It
follows from (\ref{R}) and (\ref{m-funct}) that for any $x$%
\[
\left\vert R\left(  k\right)  \right\vert =\left\vert \frac{m_{-}\left(
k^{2},x\right)  +\overline{m_{+}\left(  k^{2},x\right)  }}{m_{-}\left(
k^{2},x\right)  +m_{+}\left(  k^{2},x\right)  }\right\vert .
\]
Using the same arguments as in the proof of Lemma \ref{lemma on Weyl sltin},
from condition 1 we conclude that for each $k^{2}=\omega_{n}^{2}$ there is a
point $x=a_{n}$ such that $m_{-}\left(  k^{2},a_{n}\right)  $ has an embedded
simple pole at $\omega_{n}^{2}$. This immediately implies that $\left\vert
R\left(  \omega_{n}\right)  \right\vert =1$. In other words, a plane wave
coming from $-\infty$ with energy $\omega_{n}^{2}$ is completely reflected
from $q$. Due to condition 2, it follows from (\ref{basic scatt identity})
that%
\begin{align*}
R\left(  \omega_{n}\right)  ^{-1/2}\varphi(x,\omega_{n})  &  =\overline
{R\left(  \omega_{n}\right)  ^{1/2}\psi\left(  x,\omega_{n}\right)  }+R\left(
\omega_{n}\right)  ^{1/2}\psi\left(  x,\omega_{n}\right) \\
&  =2\operatorname{Re}R\left(  \omega_{n}\right)  ^{1/2}\psi\left(
x,\omega_{n}\right)  ,
\end{align*}
where the root is chosen with the argument in $(-\pi,\pi]$. Since $R\left(
\omega_{n}\right)  ^{-1/2}\varphi(x,k)$ is a Weyl solution that has a finite
boundary value at $\omega_{n}$, by Lemma \ref{lemma on Weyl sltin}, from
condition 1 we conclude that%
\begin{equation}
\phi_{n}\left(  x\right)  =R\left(  \omega_{n}\right)  ^{-1/2}\varphi
(x,\omega_{n})=2\operatorname{Re}R\left(  \omega_{n}\right)  ^{1/2}\psi\left(
x,\omega_{n}\right)  \label{growth of fi n}%
\end{equation}
is a real $L^{2}\left(  -\infty\right)  $ solution of $\mathbb{L}_{q}%
u=\omega_{n}^{2}u$ and the first bullet item is proven. Since $\psi\left(
x,\omega_{n}\right)  \sim\mathrm{e}^{\mathrm{i}\omega_{n}x}$ at $+\infty$,
(\ref{growth of fi n}) also yields%
\begin{equation}
\phi_{n}\left(  x\right)  \sim2\cos\left(  \omega_{n}x+\frac{1}{2}\arg
R\left(  \omega_{n}\right)  \right)  ,\ \ \ x\rightarrow+\infty. \label{cos}%
\end{equation}

We are ready now to present our candidates for a new pair $\varphi_{+N}\left(
x,k\right)  $, $\psi_{+N}\left(  x,k\right)  $ which is a suitable gauge
transformation of $\varphi\left(  x,k\right)  $, $\psi\left(  x,k\right)  $.
Taking in (\ref{guage})%
\[
a_{n}\left(  x\right)  =\alpha_{n}y_{n}\left(  x\right)  ,f_{n}\left(
x,k\right)  =\dfrac{\phi_{n}\left(  x\right)  }{k^{2}-\omega_{n}^{2}}%
\]
with some real $\left(  y_{n}\right)  $ to be determined, we have%
\begin{equation}
\varphi_{+N}\left(  x,k\right)  =\varphi\left(  x,k\right)  +%
{\displaystyle\sum\limits_{m=1}^{N}}
\alpha_{m}y_{m}\left(  x\right)  \dfrac{W\left(  \varphi\left(  x,k\right)
,\phi_{m}\left(  x\right)  \right)  }{k^{2}-\omega_{m}^{2}}, \label{Fi+N}%
\end{equation}%
\begin{equation}
\psi_{+N}\left(  x,k\right)  =\psi\left(  x,k\right)  +%
{\displaystyle\sum\limits_{m=1}^{N}}
\alpha_{m}y_{m}\left(  x\right)  \dfrac{W\left(  \psi\left(  x,k\right)
,\phi_{m}\left(  x\right)  \right)  }{k^{2}-\omega_{m}^{2}}. \label{Ksi+N}%
\end{equation}

Consider $\varphi_{+N}$ first. Since $\varphi,\phi_{n}\in L^{2}\left(
-\infty\right)  $ ($\varphi$ is a Weyl solution at $-\infty$), it follows from
(\ref{Wr id}) that%
\begin{equation}
\dfrac{W\left(  \varphi\left(  x,k\right)  ,\phi_{n}\left(  x\right)  \right)
}{k^{2}-\omega_{n}^{2}}=\int_{-\infty}^{x}\varphi\left(  s,k\right)  \phi
_{n}\left(  s\right)  \mathrm{d}s,\ \ \ \operatorname{Im}k>0. \label{W}%
\end{equation}
By Lemma \ref{lemma on Weyl sltin} and (\ref{growth of fi n}) for any $m,n$
one has%
\[
\left.  \dfrac{W\left(  \varphi\left(  x,k\right)  ,\phi_{n}\left(  x\right)
\right)  }{k^{2}-\omega_{m}^{2}}\right\vert _{k=\omega_{n}}=R\left(
\omega_{n}\right)  ^{1/2}\int_{-\infty}^{x}\phi_{n}\left(  s\right)  \phi
_{m}\left(  s\right)  \mathrm{d}s.
\]
Thus $\varphi_{+N}$ is continuous at each $\omega_{n}$ and it follows from
(\ref{Fi+N}) that%
\begin{equation}
\varphi_{+N}\left(  x,\omega_{n}\right)  =R\left(  \omega_{n}\right)
^{1/2}\left\{  \phi_{n}\left(  x\right)  +%
{\displaystyle\sum\limits_{m=1}^{N}}
\alpha_{m}y_{m}\left(  x\right)  \int_{-\infty}^{x}\phi_{n}\left(  s\right)
\phi_{m}\left(  s\right)  \mathrm{d}s\right\}  . \label{fi+N}%
\end{equation}
Turn to $\psi_{+N}$ now. One can see that it has an embedded simple pole at
each $\omega_{n}$. Let us compute its residue. Since $\psi$ is Jost at
$+\infty$, it follows from (\ref{basic scatt identity}) that%
\begin{equation}
W(\psi\left(  x,k\right)  ,\varphi\left(  x,k\right)  )=W(\psi\left(
x,k\right)  ,\overline{\psi(x,k)})=-2\mathrm{i}k \label{Wron}%
\end{equation}
and therefore by (\ref{growth of fi n})%
\[
W\left(  \psi\left(  x,\omega_{n}\right)  ,\phi_{n}\left(  x\right)  \right)
=R\left(  \omega_{n}\right)  ^{-1/2}W\left(  \psi\left(  x,\omega_{n}\right)
,\varphi\left(  x,\omega_{n}\right)  \right)  =-2\mathrm{i}\omega_{n}R\left(
\omega_{n}\right)  ^{-1/2}.
\]
Thus, from (\ref{Ksi+N}) one obtains%
\begin{align}
\operatorname*{Res}_{k=\omega_{n}}\psi_{+N}\left(  x,k\right)   &  =\alpha
_{n}y_{n}\left(  x\right)  \dfrac{W\left(  \psi\left(  x,\omega_{n}\right)
,\phi_{n}\left(  x\right)  \right)  }{2\omega_{n}}\nonumber\\
&  =-\mathrm{i}\alpha_{n}R\left(  \omega_{n}\right)  ^{-1/2}y_{n}\left(
x\right)  . \label{ksi+N}%
\end{align}
We choose now $\left(  y_{n}\right)  $ to satisfy our embedded pole condition
(\ref{embedded pole cond}):%
\begin{equation}
\operatorname*{Res}_{k=\omega_{n}}\psi_{+N}\left(  x,k\right)  =\frac
{\mathrm{i}\alpha_{n}^{2}}{R\left(  \omega_{n}\right)  }\varphi_{+N}\left(
x,\omega_{n}\right)  . \label{real pole cond (scalar form)}%
\end{equation}
Substituting (\ref{ksi+N}) and (\ref{fi+N}) in
(\ref{real pole cond (scalar form)}) we have%
\begin{align*}
&  -\mathrm{i}\alpha_{n}R\left(  \omega_{n}\right)  ^{-1/2}y_{n}\\
&  =\frac{\mathrm{i}\alpha_{n}^{2}}{R\left(  \omega_{n}\right)  }R\left(
\omega_{n}\right)  ^{1/2}\left(  \phi_{n}\left(  x\right)  +%
{\displaystyle\sum\limits_{m=1}^{N}}
\alpha_{m}y_{m}\int_{-\infty}^{x}\phi_{m}\left(  s\right)  \phi_{n}\left(
s\right)  \mathrm{d}s\right)  .
\end{align*}
$R\left(  \omega_{n}\right)  $ drops out\footnote{This was the reason for
putting it in (\ref{embedded pole cond}).} and we immediately arrive at the
linear system%
\begin{equation}
y_{n}\left(  x\right)  +%
{\displaystyle\sum\limits_{m=1}^{N}}
y_{m}\left(  x\right)  \int_{-\infty}^{x}\alpha_{m}\phi_{m}\left(  s\right)
\ \alpha_{n}\phi_{n}\left(  s\right)  \ \mathrm{d}s=-\alpha_{n}\phi_{n}\left(
x\right)  \label{syst}%
\end{equation}
in $y_{n}$. In matrix form this system coincides with (\ref{linear sys}) which
is nonsingular. Indeed,%
\begin{align*}
\boldsymbol{G}_{+}\left(  x\right)   &  =\left(  \int_{-\infty}^{x}\left(
\alpha_{m}\phi_{m}\left(  s\right)  \right)  \left(  \alpha_{n}\phi_{n}\left(
s\right)  \right)  \mathrm{d}s\right)  =\left(  \alpha_{m}\left[
\int_{-\infty}^{x}\phi_{m}\left(  s\right)  \phi_{n}\left(  s\right)
\mathrm{d}s\right]  \alpha_{n}\right) \\
&  =\mathbf{A}\left[  \int_{-\infty}^{x}\boldsymbol{\Phi}\left(  s\right)
^{T}\boldsymbol{\Phi}\left(  s\right)  \mathrm{d}s\right]  \mathbf{A}^{T}%
=\int_{-\infty}^{x}\left[  \boldsymbol{\Phi}\left(  s\right)  \mathbf{A}%
^{T}\right]  ^{T}\left[  \boldsymbol{\Phi}\left(  s\right)  \mathbf{A}%
^{T}\right]  \ \mathrm{d}s.
\end{align*}
Therefore, $\mathbf{I}+\boldsymbol{G}_{+}\left(  x\right)  $ is positive
definite and the system (\ref{syst}) has a unique solution $\left(
y_{n}\right)  $ for any real $\alpha_{n}$ and $x$. Its main feature is that
$y_{n}\in L^{2}\left(  \mathbb{R}\right)  $. Indeed, since $\phi_{n}\in
L^{2}\left(  -\infty\right)  $ we conclude $\left\Vert \boldsymbol{G}%
_{+}\left(  x\right)  \right\Vert =o\left(  1\right)  $, $x\rightarrow-\infty
$, and $y_{n}\left(  x\right)  \sim-\alpha_{n}\phi_{n}\left(  x\right)  \in
L^{2}\left(  -\infty\right)  $. To show that $y_{n}\in L^{2}\left(
+\infty\right)  $ we observe first that (\ref{cos}) implies that for each
entry of $\boldsymbol{G}_{+}\left(  x\right)  $ we have $g_{nn}\left(
x\right)  =O\left(  x\right)  $, $g_{mn}\left(  x\right)  =O\left(  1\right)
$, $m\neq n$, as $x\rightarrow+\infty$. Therefore, $\left\Vert \left(
\mathbf{I}+\boldsymbol{G}_{+}\left(  x\right)  \right)  ^{-1}\right\Vert
=O\left(  x^{-1}\right)  $, as $x\rightarrow+\infty$, and so $y_{n}\left(
x\right)  =O\left(  1/x\right)  \in L^{2}\left(  +\infty\right)  $.

Show now that $\varphi_{+N}\left(  x,k\right)  \in L^{2}\left(  -\infty
\right)  ,$ $\psi_{+N}\left(  x,k\right)  \in L^{2}\left(  +\infty\right)  $
for $\operatorname{Im}k>0$. Substituting (\ref{W}) into (\ref{Fi+N}) yields%
\[
\varphi_{+N}\left(  x,k\right)  =\varphi\left(  x,k\right)  +%
{\displaystyle\sum\limits_{n=1}^{N}}
\alpha_{n}y_{n}\left(  x\right)  \int_{-\infty}^{x}\varphi\left(  s,k\right)
\phi_{n}\left(  s\right)  \mathrm{d}s.
\]
Since $\varphi\left(  x,k\right)  $ is (as a left Weyl solution) in
$L^{2}\left(  -\infty\right)  $ for $\operatorname{Im}k>0$, and (as is already
proven) $y_{n}\in L^{2}$, and $\phi_{n}\in L^{2}\left(  -\infty\right)  $, one
concludes that $\varphi_{+N}\left(  x,k\right)  \in L^{2}\left(
-\infty\right)  $ for $\operatorname{Im}k>0$.

Turn to $\psi_{+N}\left(  x,k\right)  $. Since $\psi\left(  x,k\right)  $ is
Jost at $+\infty$ and due to (\ref{cos}), one has $W\left(  \psi\left(
x,k\right)  ,\phi_{n}\left(  x\right)  \right)  =O\left(  1\right)  $,
$x\rightarrow+\infty$, $\operatorname{Im}k\geq0$. Therefore, (\ref{Ksi+N}) and
(\ref{Fi+N}) imply%
\begin{equation}
\psi_{+N}\left(  x,k\right)  =\psi\left(  x,k\right)  +O\left(  1/x\right)
,\operatorname{Im}k\geq0,x\rightarrow+\infty, \label{Jost+N}%
\end{equation}
which proves that $\psi_{+N}\left(  x,k\right)  $ behaves like a Jost solution
at $+\infty$ and hence $\psi_{+N}\left(  x,k\right)  \in L^{2}\left(
+\infty\right)  $ for $\operatorname{Im}k>0$. By Lemma \ref{lemma}%
\[
\varphi_{+N}\left(  x,k\right)  =\overline{\psi_{+N}\left(  x,k\right)
}+R\left(  k\right)  \psi_{+N}\left(  x,k\right)
\]
holds for a.e. $\operatorname{Im}k=0$, which together with (\ref{Jost+N})
yields%
\begin{align}
W(\varphi_{+N}\left(  x,k\right)  ,\psi_{+N}\left(  x,k\right)  )  &
=W(\overline{\psi_{+N}(x,k)},\psi_{+N}\left(  x,k\right)  )
\label{Wron for +N}\\
&  =\lim_{x\rightarrow+\infty}W(\overline{\psi_{+N}(x,k)},\psi_{+N}\left(
x,k\right)  )=2\mathrm{i}k.\nonumber
\end{align}

Assume for the time being that $\varphi_{+N}\left(  x,k\right)  ,$ $\psi
_{+N}\left(  x,k\right)  $ also solve the Schrodinger equation with some
potential $q_{+N}\left(  x\right)  $. Thus, we have constructed an ansatz
$\varphi_{+N}\left(  x,k\right)  ,$ $\psi_{+N}\left(  x,k\right)  $ with
desirable properties: $\varphi_{+N}\left(  x,k\right)  $ is a left Weyl
solution and $\psi\left(  x,k\right)  $ is a right Weyl solution (i.e. for
$\operatorname{Im}k>0$ $\varphi_{+N}\left(  x,k\right)  \in L^{2}\left(
-\infty\right)  $, $\psi_{+N}\left(  x,k\right)  \in L^{2}\left(
+\infty\right)  $) and therefore (see e.g. \cite{TeschlBOOK})%
\begin{align*}
g_{+N}\left(  k^{2},x\right)   &  =-\frac{\varphi_{+N}\left(  x,k\right)
\psi_{+N}\left(  x,k\right)  }{W(\varphi_{+N}\left(  x,k\right)  ,\psi
_{+N}\left(  x,k\right)  )}\\
&  =-\frac{\varphi_{+N}\left(  x,k\right)  \psi_{+N}\left(  x,k\right)
}{2\mathrm{i}k}\text{ \ (by (\ref{Wron for +N})).}%
\end{align*}
is the diagonal Green's function associated with $q_{+N}\left(  x\right)  $.
Since by the construction $\psi_{+N}\left(  x,k\right)  $ has an embedded
simple pole at each $k^{2}=\omega_{n}^{2}$ (but $\varphi_{+N}$ does not
identically vanish there) we conclude that $g\left(  k^{2},x\right)  $ also
has embedded simple poles at $k^{2}=\omega_{n}^{2}$ and thus all $\omega
_{n}^{2}$ are embedded eigenvalues of $q_{+N}\left(  x\right)  $ which is, in
turn, can be computed from (\ref{q via G}). There is a simpler alternative way
to compute $q_{+N}\left(  x\right)  $ based on%
\begin{equation}
\psi(x,k)\backsim\mathrm{e}^{\mathrm{i}kx}\left(  1-\frac{1}{2\mathrm{i}k}%
\int_{x}^{\infty}q\left(  s\right)  \mathrm{d}s\right)  ,\ \ \ k\rightarrow
\infty,\;\operatorname{Im}k\geq0. \label{asympt of Jost}%
\end{equation}
Since%
\[
W\left(  \psi,\phi_{n}\right)  \sim\mathrm{e}^{\mathrm{i}kx}\left(  \phi
_{n}^{\prime}-\mathrm{i}k\phi_{n}\right)  ,\ \ \ k\rightarrow\infty,
\]
we have: as$\ k\rightarrow\infty$%
\begin{align*}
\mathrm{e}^{-\mathrm{i}kx}\left(  \psi_{+N}-\psi\right)  \left(  x,k\right)
&  \sim-%
{\displaystyle\sum\limits_{n}}
\alpha_{n}y_{n}\left(  x\right)  \phi_{n}\left(  x\right)  \dfrac{\mathrm{i}%
k}{k^{2}-\omega_{n}^{2}}\\
&  \sim\frac{1}{\mathrm{i}k}%
{\displaystyle\sum\limits_{n}}
y_{n}\left(  x\right)  \left(  \alpha_{n}\phi_{n}\left(  x\right)  \right)
=\frac{1}{\mathrm{i}k}\mathbf{y}\left(  x\right)  \boldsymbol{\phi}\left(
x\right)  ^{T}\\
&  =-\frac{1}{\mathrm{i}k}\boldsymbol{\phi}\left(  x\right)  \left(
\mathbf{I}+\mathbf{G}_{+}\left(  x\right)  \right)  ^{-1}\boldsymbol{\phi
}\left(  x\right)  ^{T},
\end{align*}
where $\mathbf{y}:=\left(  y_{n}\right)  $, $\boldsymbol{\phi}:=\left(
\alpha_{n}\phi_{n}\right)  $. But by Jacobi's formula on differentiation of
determinants, we have (suppressing $x$)%
\begin{align*}
\boldsymbol{\phi}\left(  \mathbf{I}+\mathbf{G}_{+}\right)  ^{-1}%
\boldsymbol{\phi}^{T}  &  =\boldsymbol{\phi}\frac{\operatorname*{adj}\left(
\mathbf{I}+\mathbf{G}_{+}\right)  }{\det\left(  \mathbf{I}+\mathbf{G}%
_{+}\right)  }\boldsymbol{\phi}^{T}\\
&  =\sum_{m,n}\frac{\left(  \operatorname*{adj}\left(  \mathbf{I}%
+\mathbf{G}_{+}\right)  \right)  _{mn}}{\det\left(  \mathbf{I}+\mathbf{G}%
_{+}\right)  }\phi_{m}\phi_{n}=\sum_{m,n}\frac{\left(  \operatorname*{adj}%
\left(  \mathbf{I}+\mathbf{G}_{+}\right)  \right)  _{mn}}{\det\left(
\mathbf{I}+\mathbf{G}_{+}\right)  }g_{mn}^{\prime}\\
&  =\operatorname*{tr}\left\{  \left(  \mathbf{I}+\mathbf{G}_{+}\right)
^{\prime}\frac{\operatorname*{adj}\left(  \mathbf{I}+\mathbf{G}_{+}\right)
}{\det\left(  \mathbf{I}+\mathbf{G}_{+}\right)  }\right\}  =\frac{\left(
\det\left(  \mathbf{I}+\mathbf{G}_{+}\right)  \right)  ^{\prime}}{\det\left(
\mathbf{I}+\mathbf{G}_{+}\right)  }\\
&  =\left(  \log\det\left(  \mathbf{I}+\mathbf{G}_{+}\right)  \right)
^{\prime},
\end{align*}
(where as before $g_{mn}$ stands for the $\left(  m,n\right)  $ entry of
$\mathbf{G}_{+}$) and thus%
\[
\mathrm{e}^{-\mathrm{i}kx}\left(  \psi_{+N}-\psi\right)  \left(  x,k\right)
\sim-\frac{1}{\mathrm{i}k}\partial_{x}\log\det\left(  \mathbf{I}%
+\mathbf{G}_{+}\left(  x\right)  \right)  .
\]
By (\ref{asympt of Jost}),%
\[
\mathrm{e}^{-\mathrm{i}kx}\left(  \psi_{+N}-\psi\right)  \left(  x,k\right)
\sim-\frac{1}{2\mathrm{i}k}\int_{x}^{\infty}\left(  q_{+N}-q\right)  \left(
s\right)  \mathrm{d}s,\ \ \ k\rightarrow\infty,
\]
and hence%
\[
q_{+N}\left(  x\right)  -q\left(  x\right)  =-2\partial_{x}^{2}\log\det\left(
\mathbf{I}+\mathbf{G}_{+}\left(  x\right)  \right)
\]
and (\ref{q+N}) follows. By a direct verification (routinely performed for
Darboux transformations), functions $\varphi_{+N}\left(  x,k\right)  ,$
$\psi_{+N}\left(  x,k\right)  $ indeed solve the Schrodinger equation with the
potential $q_{+N}$. (See also the proof of Corollary \ref{bounded positons}).

As we have shown, $y_{n}\left(  x\right)  \in L^{2}\left(  \mathbb{R}\right)
$ is, due to (\ref{ksi+N}), proportional to $\operatorname*{Res}_{k=\omega
_{n}}\psi_{+N}$, which, in turn, solves $-u^{\prime\prime}+q_{+N}\left(
x\right)  u=\omega_{n}^{2}u$ and we conclude that $y_{n}\left(  x\right)  $ is
an eigenfunction of $\mathbb{L}_{q_{+N}}$. This concludes the proof.
\end{proof}

Following the standard terminology \cite{MatveevSalle91}, the transformation
$\left(  \varphi,\psi\right)  \rightarrow\left(  \varphi_{+N},\psi
_{+N}\right)  $ constructed in the proof of Theorem \ref{MainThm} is directly
related, as was mentioned in Introduction, to the binary Darboux
transformation (double commutation method)\emph{. }As the very name (given by
Deift \cite{DeiftDuke78} in 1978) suggests, the method rests on applying twice
a commutation formula from operator theory. Note that basic formulas which the
double commutation produces had been known to Gelfand and Levitan
\cite{GelfandLevitan55} already in 1951 in the context of their ground
breaking study of the inverse spectral problem for Sturm-Liouville operators
(although no commutation arguments were used). The full treatment of the
double commutation method is given by Gesztesy et al \cite{GesztesyetalTAMS91}%
-\cite{GesztTeschl96} in the 1990s (see also the extensive literature cited
therein). The double commutation method was introduced to study the effect of
inserting/removing eigenvalues in spectral gaps on spectral properties of the
underlying 1D Schrodinger operators while the binary Darboux transformation
has been primarily a tool to produce explicit solutions. This is likely a
reason why we could not find the literature where the two would be
linked\footnote{E.g. the book \cite{GuetalBook05} pays much of attention to
binary Darboux transformations but double commutation is not mentioned. The
recent \cite{Sahknovich17} briefly mentiones \cite{GuetalBook05} and
\cite{GesztTeschl96} but without discussing connections.}. The double
commutation method can also be applied to inserting/removing bound states into
absolutely continuous spectra. In fact, in the half-line case it was first
done (well before the term was coined) by Gelfand and Levitan \cite[Section
6.6]{Levitan87} and revisited in \cite[Section 4]{Eastham1982} from the double
commutation point of view. The formula derived in \cite[Section 4]%
{Eastham1982} for the half-line case coincides with (\ref{q+N}) for $N=1$ but
no formula for $N>1$ is given. In \cite{GesztTeschl96} it is mentioned that
the approach of \cite{GesztTeschl96} can yield such a formula in the full line
case but to the best of our knowledge it has not been explicitly done. We
emphasize however that our approach is unrelated to double commutation
arguments and instead stems from the \emph{Riemann-Hilbert problem} approach
to the Darboux transformation put recently forward in \cite{RybSAM21}. The
latter comes directly from inverse scattering and that is why it is much more
suited for the IST (see Corollary \ref{Corollary on norming consts} below).

Theorem \ref{MainThm} has some important corollaries.

\begin{corollary}
\label{Corollary on norming consts}Assume that $q\left(  x\right)  $ in
Theorem \ref{MainThm} is short-range at $+\infty$\footnote{That is $xq\left(
x\right)  \in L^{1}\left(  +\infty\right)  .$} and has the scattering data
$S\left(  q\right)  =\left\{  R\left(  k\right)  ,\left(  -\kappa_{n}%
^{2},c_{n}^{2}\right)  \right\}  $. Then $S\left(  q_{+N}\right)  =S\left(
q\right)  \cup\left\{  \left(  \omega_{n}^{2},\alpha_{n}^{2}\right)  ,1\leq
n\leq N\right\}  $ is the scattering data for $q_{+N}$.
\end{corollary}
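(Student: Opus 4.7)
The plan is to recycle the pair $\left(  \varphi_{+N},\psi_{+N}\right)  $ constructed in the proof of Theorem \ref{MainThm} and check piece by piece that each scattering datum transforms as claimed. Two of the three ingredients are essentially free. Lemma \ref{lemma} on gauge transformation guarantees that $\varphi_{+N}$ and $\psi_{+N}$ still satisfy the basic scattering relation (\ref{basic scatt identity}) with \emph{exactly} the same $R\left(  k\right)  $, while (\ref{Jost+N}) shows that $\psi_{+N}\left(  x,k\right)  \sim\mathrm{e}^{\mathrm{i}kx}$ at $+\infty$, i.e.\ $\psi_{+N}$ is the right Jost solution for $q_{+N}$. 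Hence the reflection coefficient of $q_{+N}$ is precisely $R$ via (\ref{R}). The new embedded bound-state data $\left(  \omega_{n}^{2},\alpha_{n}^{2}\right)  $ are also already in hand: equation (\ref{real pole cond (scalar form)}) is literally the embedded pole condition (\ref{embedded pole cond}) written for the pair $\left(  \varphi_{+N},\psi_{+N}\right)  $ with the prescribed $\alpha_{n}^{2}$.

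The substantive computation is preservation of the negative bound-state data $\left(  -\kappa_{n}^{2},c_{n}^{2}\right)  $. The key geometric observation is that the resonance points $\omega_{n}$ are real while each $\mathrm{i}\kappa_{n}$ is purely imaginary, so every rational factor $\left(  k^{2}-\omega_{m}^{2}\right)  ^{-1}$ appearing in (\ref{Fi+N})--(\ref{Ksi+N}) is holomorphic in a neighborhood of $k=\mathrm{i}\kappa_{n}$. Consequently $\psi_{+N}$ inherits no singularity from the gauge terms at $\mathrm{i}\kappa_{n}$, and $\varphi_{+N}$ carries a simple pole with the same residue structure as $\varphi$. I would then take $\operatorname*{Res}_{k=\mathrm{i}\kappa_{n}}$ on both sides of (\ref{Fi+N}), substitute the original isolated-pole condition $\operatorname*{Res}_{k=\mathrm{i}\kappa_{n}}\varphi=\mathrm{i}c_{n}^{2}\psi$, and use bilinearity of the Wronskian to factor $\mathrm{i}c_{n}^{2}$ out, recognizing the resulting bracketed expression as precisely $\psi_{+N}\left(  x,\mathrm{i}\kappa_{n}\right)  $ by (\ref{Ksi+N}). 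This yields (\ref{isolated pole cond}) for $q_{+N}$ with the \emph{same} $c_{n}^{2}$.

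To close the argument I would rule out extraneous spectral data. The Wronskian identity (\ref{Wron for +N}) gives $W\left(  \varphi_{+N},\psi_{+N}\right)  =2\mathrm{i}k$ on the real axis; this extends by analyticity to $\operatorname{Im}k>0$ and forbids new poles of $g_{+N}=-\varphi_{+N}\psi_{+N}/\left(  2\mathrm{i}k\right)  $ in the open upper half-plane, so no additional negative bound states or embedded singularities sneak in. The main obstacle, and the step requiring the most care, is verifying that $q_{+N}$ is still short-range at $+\infty$ so that the classical scattering framework applies unambiguously to $q_{+N}$ and justifies speaking of the set $S\left(  q_{+N}\right)  $ in the first place. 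This should follow from the $O\left(  1/x\right)  $ correction in (\ref{Jost+N}) together with the $O\left(  x\right)  $ growth of the diagonal of $\mathbf{G}_{+}\left(  x\right)  $ quantified in the proof of Theorem \ref{MainThm}, which jointly control the decay of $\partial_{x}^{2}\log\det\left(  \mathbf{I}+\mathbf{G}_{+}\left(  x\right)  \right)  $ at $+\infty$; turning these asymptotics into an honest $xq_{+N}\in L^{1}\left(  +\infty\right)  $ bound is the most delicate point.
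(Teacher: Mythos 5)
Your central computation is exactly the paper's proof: the paper's argument consists solely of verifying (\ref{pole cond +N}) by taking $\operatorname*{Res}_{\mathrm{i}\kappa_{n}}$ in (\ref{Fi+N}), inserting the original isolated pole condition (\ref{isolated pole cond}), and using bilinearity of the Wronskian to recognize $\mathrm{i}c_{n}^{2}\psi_{+N}\left(  x,\mathrm{i}\kappa_{n}\right)  $ from (\ref{Ksi+N}). The preservation of $R$ and the presence of the new data $\left(  \omega_{n}^{2},\alpha_{n}^{2}\right)  $ are, as you note, already built into the proof of Theorem \ref{MainThm}, which is why the paper dismisses them in one sentence.

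Your final step, however, aims at a statement that is false. You propose to close the argument by proving $xq_{+N}\left(  x\right)  \in L^{1}\left(  +\infty\right)  $; but the remark immediately following the corollary computes $q\left(  x\right)  -q_{+N}\left(  x\right)  \sim\sum_{n}\left(  A_{n}/x\right)  \sin\left(  2\omega_{n}x+\delta_{n}\right)  $ as $x\rightarrow+\infty$ (see (\ref{difference})), so $q_{+N}$ acquires a Wigner--von Neumann-type tail at $+\infty$ and is emphatically \emph{not} short-range there. This is not an accident: a later remark observes that an embedded bound state forces $\left\vert R\left(  \omega\right)  \right\vert =1$ at some $\omega\neq0$, which is impossible for a short-range potential. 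The resolution is that the corollary does not need $q_{+N}$ to be short-range: the set $S\left(  q_{+N}\right)  $ is understood in the generalized sense of Section \ref{background info} --- reflection coefficient via (\ref{R}) for the Weyl pair $\left(  \varphi_{+N},\psi_{+N}\right)  $, norming constants via the pole conditions (\ref{isolated pole cond}) and (\ref{embedded pole cond}) --- for which Hypothesis \ref{hyp1.1} together with the asymptotics (\ref{Jost+N}) suffices. Delete that step and your argument coincides with the paper's.
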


\begin{proof}
We only need to show that our binary Dabroux transformation preserves the
discrete spectrum data $\left(  -\kappa_{n}^{2},c_{n}^{2}\right)  $. To this
end it suffices to show that%
\begin{equation}
\operatorname*{Res}_{\mathrm{i}\kappa_{n}}\varphi_{+N}\left(  x,k\right)
=\mathrm{i}c_{n}^{2}\psi_{+N}\left(  x,\mathrm{i}\kappa_{n}\right)  .
\label{pole cond +N}%
\end{equation}
Indeed, since $\operatorname*{Res}_{\mathrm{i}\kappa_{n}}\varphi\left(
x,k\right)  =$\textrm{$i$}$c_{n}^{2}\psi\left(  x,\mathrm{i}\kappa_{n}\right)
$ it immediately follows from (\ref{Fi+N}) and (\ref{Ksi+N}) that%
\begin{align*}
\operatorname*{Res}_{\mathrm{i}\kappa_{n}}\varphi_{+N}\left(  x,k\right)   &
=\operatorname*{Res}_{\mathrm{i}\kappa_{n}}\varphi\left(  x,k\right)  +%
{\displaystyle\sum\limits_{m=1}^{N}}
\alpha_{m}y_{m}\left(  x\right)  \dfrac{W\left(  \operatorname*{Res}%
_{\mathrm{i}\kappa_{n}}\varphi\left(  x,k\right)  ,\phi_{m}\left(  x\right)
\right)  }{k^{2}-\omega_{m}^{2}}\\
&  =\mathrm{i}c_{n}^{2}\left\{  \psi\left(  x,\mathrm{i}\kappa_{n}\right)  +%
{\displaystyle\sum\limits_{m=1}^{N}}
\alpha_{m}y_{m}\left(  x\right)  \dfrac{W\left(  \psi\left(  x,\mathrm{i}%
\kappa_{n}\right)  ,\phi_{m}\left(  x\right)  \right)  }{k^{2}-\omega_{m}^{2}%
}\right\} \\
&  =\mathrm{i}c_{n}^{2}\psi_{+N}\left(  x,\mathrm{i}\kappa_{n}\right)  .
\end{align*}

\end{proof}

Rowan Killip asked the author if embedded bound states require norming
constants. Corollary \ref{Corollary on norming consts} answers his question in
the affirmative: $\left(  \alpha_{n}^{2}\right)  $ play the role of norming
constants of embedded bound states.

\begin{remark}
In particular, for one embedded eigenvalue $\omega^{2}$ we have%
\begin{align}
q_{+1}\left(  x\right)   &  =q\left(  x\right)  -2\partial_{x}^{2}\log\left(
1+\alpha^{2}\int_{-\infty}^{x}\phi\left(  s\right)  ^{2}\mathrm{d}s\right)
,\label{q+1}\\
\phi\left(  s\right)   &  =2\operatorname{Re}\left[  R\left(  \omega\right)
^{1/2}\psi\left(  s,\omega\right)  \right]  .\nonumber
\end{align}
In this case $\left\Vert y\right\Vert =1$. To get to $q_{+N}\left(  x\right)
$ we can break our binary Darboux transformation into the chain of iterated
transformations $\psi_{+\left(  n-1\right)  }\left(  x,k\right)
\rightarrow\psi_{+n}\left(  x,k\right)  $,$1\leq$ $n\leq N$, resulting in
building $q_{+N}\left(  x\right)  $ by the simple recurrence formula%
\begin{align}
q_{+n}\left(  x\right)   &  =q_{+\left(  n-1\right)  }\left(  x\right)
\label{chain}\\
&  -2\partial_{x}^{2}\log\left(  1+4\alpha_{n}^{2}\int_{-\infty}%
^{x}\operatorname{Re}^{2}\left[  R\left(  \omega_{n}\right)  ^{1/2}%
\psi_{+\left(  n-1\right)  }\left(  s,\omega_{n}\right)  \right]
\mathrm{d}s\right)  ,\nonumber
\end{align}
each step being easy to control.
\end{remark}

\begin{remark}
It follows from (\ref{growth of fi n}) and (\ref{cos}) that $q\left(
x\right)  -q_{+N}\left(  x\right)  $ is continuous, in $L^{1}\left(
-\infty\right)  $ and $O\left(  1/x\right)  ,x\rightarrow+\infty$. I.e., as
expected $q_{+N}\left(  x\right)  $ is no longer short-range at $+\infty$ even
if $q\left(  x\right)  $ is. More specifically, the discrepancy is%
\begin{equation}
q\left(  x\right)  -q_{+N}\left(  x\right)  \sim\sum_{n=1}^{N}\frac{A_{n}}%
{x}\sin\left(  2\omega_{n}x+\delta_{n}\right)  ,x\rightarrow+\infty,
\label{difference}%
\end{equation}
with some $A_{n},\delta_{n}$. Due to (\ref{chain}), it suffices to demonstrate
(\ref{difference}) for $N=1$. It follows from (\ref{cos}) that%
\begin{align*}
\tau\left(  x\right)   &  :=1+4\alpha^{2}\int_{-\infty}^{x}\operatorname{Re}%
^{2}\left[  R\left(  \omega\right)  ^{1/2}\psi\left(  s,\omega\right)
\right]  \mathrm{d}s\\
&  =1+4\alpha^{2}\int_{-\infty}^{x}\phi\left(  s\right)  ^{2}\mathrm{d}%
s=O\left(  x\right)  ,\ \ x\rightarrow+\infty,
\end{align*}
which, due to (\ref{q+1}) and (\ref{cos}), implies that%
\begin{align*}
q\left(  x\right)  -q_{+1}\left(  x\right)   &  =2\partial_{x}^{2}\log
\tau\left(  x\right)  =\tau^{\prime\prime}\left(  x\right)  /\tau\left(
x\right)  -\left[  \tau^{\prime}\left(  x\right)  /\tau\left(  x\right)
\right]  ^{2}\\
&  =8\alpha^{2}\phi\left(  x\right)  \phi^{\prime}\left(  x\right)
/\tau\left(  x\right)  -\left[  4\alpha^{2}\phi\left(  x\right)  ^{2}%
/\tau\left(  x\right)  \right]  ^{2}\\
&  \sim\frac{A}{x}\sin\left(  2\omega x+\arg R\left(  \omega\right)  \right)
,\ \ \ x\rightarrow+\infty,
\end{align*}
with some constant $A$. These elementary arguments do not readily yield the
coefficients in (\ref{difference}) though. As in the case of negative bound
states (solitons) totally different arguments are needed to evaluate the
coefficients (work in progress).

\begin{corollary}
[bounded positons]\label{bounded positons} Assume the conditions of Corollary
\ref{Corollary on norming consts}. If $q\left(  x,t\right)  $ solves KdV with
data $S\left(  q\right)  $ then%
\begin{equation}
q_{+N}\left(  x,t\right)  =q\left(  x,t\right)  -2\partial_{x}^{2}\log
\det\left(  \mathbf{I}+\mathbf{G}_{+}\left(  x,t\right)  \right)  ,
\label{kdv sltn}%
\end{equation}
where $\mathbf{G}_{+}\left(  x,t\right)  $ is obtained from (\ref{gram mat})
by replacing $\phi_{n}\left(  x\right)  $ with%
\[
\phi_{n}\left(  x,t\right)  =2\operatorname{Re}\left[  R\left(  \omega
_{n}\right)  ^{1/2}\mathrm{e}^{4\mathrm{i}\omega_{n}^{3}t}\psi\left(
x,t,\omega_{n}\right)  \right]  ,
\]
solves KdV with data $S\left(  q_{+N}\right)  $. Moreover, embedded bound
states $\left(  \omega_{n}^{2}\right)  $ are preserved under the KdV flow.
\end{corollary}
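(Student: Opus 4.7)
The idea is to apply Theorem \ref{MainThm} and Corollary \ref{Corollary on norming consts} at each fixed $t \ge 0$ and identify the snapshot with the right-hand side of (\ref{kdv sltn}). Under the standard KdV flow, the right Jost solution $\psi(x,t,k)$ retains its $x \to +\infty$ normalization, while the scattering data evolve as
\[
R(k,t) = R(k)\,\mathrm{e}^{8\mathrm{i}k^{3}t}, \qquad c_n^2(t) = c_n^2\,\mathrm{e}^{8\kappa_n^{3}t}.
\]
Choosing the branch of the square root continuously in $t$, $R(\omega_n,t)^{1/2} = R(\omega_n)^{1/2}\,\mathrm{e}^{4\mathrm{i}\omega_n^{3}t}$, so the quantity
\[
\phi_n(x,t) = 2\operatorname{Re}\!\bigl[R(\omega_n)^{1/2}\,\mathrm{e}^{4\mathrm{i}\omega_n^{3}t}\,\psi(x,t,\omega_n)\bigr]
\]
in (\ref{kdv sltn}) is precisely the function $\phi_n$ of Theorem \ref{MainThm} built from $q(\cdot,t)$ at that instant. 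The phase $\mathrm{e}^{4\mathrm{i}\omega_n^{3}t}$ is thus forced by the $\mathrm{e}^{8\mathrm{i}k^{3}t}$-evolution of $R$, not postulated.

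\textbf{Snapshot application.} For each $t$ in the KdV lifespan, $q(\cdot,t)$ still obeys Hypothesis \ref{hyp1.1} together with the resonance and continuity conditions of Theorem \ref{MainThm}, and the norming vector $\mathbf{A}=(\alpha_n)$ is admissible. Applying Theorem \ref{MainThm} to $q(\cdot,t)$ produces exactly the right-hand side of (\ref{kdv sltn}). By Corollary \ref{Corollary on norming consts} its scattering data are
\[
S(q(\cdot,t)) \cup \{(\omega_n^{2},\alpha_n^{2}):1\le n\le N\},
\]
which is precisely the KdV evolution of $S(q_{+N}(\cdot,0)) = S(q)\cup\{(\omega_n^{2},\alpha_n^{2})\}$, provided the embedded norming constants are declared time-invariant. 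Preservation of the embedded eigenvalues $\omega_n^{2}$ under the flow is then immediate from the construction.

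\textbf{Verification that (\ref{kdv sltn}) solves KdV.} The remaining point is that (\ref{kdv sltn}) actually satisfies (\ref{KdV}). Since $q_{+N}(\cdot,t)$ is not short-range at $+\infty$ (cf.\ the Remark after Corollary \ref{Corollary on norming consts}), an IST-uniqueness argument is not directly available and a direct check in the spirit of \cite{RybSAM21} is called for. Writing $A := -4\partial_x^{3} + 6q\,\partial_x + 3q_x$ for the Lax $A$-operator associated with (\ref{KdV}), the Jost normalization at $+\infty$ forces $\partial_t\psi(x,t,k) = A\psi - 4\mathrm{i}k^{3}\psi$; hence $\mathrm{e}^{4\mathrm{i}\omega_n^{3}t}\psi(x,t,\omega_n)$ satisfies $\partial_t(\cdot) = A(\cdot)$ free of any multiplicative constant. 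Since $A$ has real coefficients and $\omega_n\in\mathbb{R}$, the real part $\phi_n(x,t)$ obeys simultaneously
\[
-\partial_x^{2}\phi_n + q\,\phi_n = \omega_n^{2}\phi_n, \qquad \partial_t\phi_n = -4\partial_x^{3}\phi_n + 6q\,\partial_x\phi_n + 3q_x\,\phi_n.
\]
These two identities are exactly the input required for the standard Darboux/double-commutation verification of KdV: combined with the Jacobi formula for $\partial_x\log\det(\mathbf{I}+\mathbf{G}_+)$ and the linear system (\ref{linear sys}) defining $\mathbf{y}$, one reduces the KdV equation for $q_{+N}$ to a bilinear identity on the tau-function $\det(\mathbf{I}+\mathbf{G}_+(x,t))$, completely analogous to the multisoliton case.

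\textbf{Main obstacle.} The conceptual part is the short observation that (\ref{kdv sltn}) is simply the snapshot of the Darboux formula, with the phase $\mathrm{e}^{4\mathrm{i}\omega_n^{3}t}$ dictated by the $\mathrm{e}^{8\mathrm{i}k^{3}t}$-law for $R$. The bulk of the technical labor lies in the tau-function bookkeeping: the $t$-derivatives must be tracked through the Jacobi identity to produce the bilinear form of KdV, which is ``routinely performed for Darboux transformations'' (as invoked at the end of the proof of Theorem \ref{MainThm}) but tedious. In addition, one must verify that $\partial_t\phi_n = A\phi_n$ preserves the $L^{2}(-\infty)$ membership of $\phi_n$ and the $L^{2}(\mathbb{R})$ membership of the eigenfunctions $y_n$ uniformly on compact $t$-intervals, so that $\mathbf{I}+\mathbf{G}_+(x,t)$ remains invertible for all $(x,t)$ and $\log\det(\mathbf{I}+\mathbf{G}_+(x,t))$ is smooth enough to admit the two $x$- and one $t$-differentiations demanded by (\ref{KdV}).
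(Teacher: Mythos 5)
Your proposal is correct and follows essentially the same route as the paper: both treat (\ref{kdv sltn}) as the snapshot of the Theorem \ref{MainThm} construction applied to $q(\cdot,t)$ (with the phase $\mathrm{e}^{4\mathrm{i}\omega_n^3 t}$ coming from the $\mathrm{e}^{8\mathrm{i}k^3 t}$ evolution of the scattering data), deduce preservation of the embedded eigenvalues from the persistence of the embedded poles of $\psi_{+N}(x,t,k)$, and defer the verification of the temporal Lax equation to standard Darboux-dressing computations (the paper additionally suggests reducing this check to the chain of one-step transformations (\ref{chain})). You are somewhat more explicit than the paper about the temporal Lax equation for $\phi_n$ and the tau-function bookkeeping, but the argument is the same.
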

\end{remark}

\begin{proof}
Well-posedness of KdV under conditions of Corollary
\ref{Corollary on norming consts} is proven in \cite{GryRybBLMS20}, the time
evolution of the scattering data $S\left(  q\right)  $ being the same as in
the short-range case. For this reason, the main part of the proof goes along
the same lines with that of Theorem \ref{MainThm}. In particular, the embedded
poles of $\psi_{+N}\left(  x,t,k\right)  $ by the very construction remain
$\omega_{n}^{2}$ and hence the time evolved diagonal Green's function has
embedded poles at $\omega_{n}^{2}$. One then concludes that embedded bound
states $\left(  \omega_{n}^{2}\right)  $ are indeed preserved under the KdV
flow. The only extra step required is to verify that $\psi_{+N}\left(
x,t,k\right)  $ solves the temporal part of the Lax pair equation. Such
computations are performed in the literature for Darboux dressing. One can
however check it independently. The simplest way to do it is, as always, to
break our binary Darboux transformation into a chain of iterated
transformations (\ref{chain}) that, adjusted for the time evolution, reads%
\begin{align*}
q_{+n}\left(  x,t\right)   &  =q_{+\left(  n-1\right)  }\left(  x,t\right) \\
&  -2\partial_{x}^{2}\log\left(  1+4\alpha_{n}^{2}\int_{-\infty}%
^{x}\operatorname{Re}^{2}\left[  R\left(  \omega_{n}\right)  ^{1/2}%
\mathrm{e}^{4\mathrm{i}\omega_{n}^{3}t}\psi_{+\left(  n-1\right)  }\left(
s,t,\omega_{n}\right)  \right]  \mathrm{d}s\right)  .
\end{align*}

\end{proof}

In the KdV context, Matveev posed in \cite{Mat02} the following question: "The
interesting question whether nonsingular positon solutions exists in the
continuous integrable models remains open as yet." Corollary
\ref{bounded positons} answers his question in the affirmative (for one
positon it was answer in our recent \cite{RybPosBS}). Matveev also conjectured
that there may exist bounded positon solutions with a trivial scattering
matrix (i.e. $R\left(  k\right)  =0$ and $T\left(  k\right)  =1$). Apparently
Theorem \ref{MainThm} does not allow us to construct such solutions with a
zero reflection coefficient.

Dmitry Pelinovsky asked the author "1) if the embedded eigenvalue disappears
in the time evolution for $t>0$ and 2) if there is any impact of the embedded
eigenvalues in the time evolution of KdV, e.g. propagation of an "embedded
soliton"\ in the direction of linear dispersive waves?" One concludes from
Corollary \ref{bounded positons} that 1) the embedded eigenvalue does not
disappear over time and 2) the effect of "embedded soliton" is manifested in
the second log-derivative term of (\ref{kdv sltn}) which says that propagation
of the ensemble of positons is determined by $4\omega_{n}^{3}t+\omega_{n}x$
which is indeed in the direction of linear dispersive waves. Furthermore, we
can show that there is a direct analog of (\ref{kdv sltn}) for (regular)
solitons if we replace $\mathbf{G}_{+}\left(  x,t\right)  $ with the matrix%
\[
\left(  c_{m}c_{n}\mathrm{e}^{8\left(  \kappa_{m}^{3}+\kappa_{n}^{3}\right)
t}\int_{x}^{\infty}\psi\left(  s,t;\mathrm{i}\kappa_{m}\right)  \psi\left(
s,t;\mathrm{i}\kappa_{n}\right)  \mathrm{d}s\right)  .
\]
Here, as before, $\left(  -\kappa_{n}^{2}\right)  $ are negative bound states
and $\left(  c_{n}^{2}\right)  $ are associated norming constants. Thus both
formulas are similar in nature and it is reasonable to expect that each
soliton property has its positon counterpart. The main difference between the
two is in-built in the profoundly different behavior of $\psi\left(
x,t;\mathrm{i}\kappa_{n}\right)  $ and $\psi\left(  x,t,\omega_{n}\right)  $:
the former has finitely many zeros ($n$ to be precise) while the latter has
infinitely many zeros for any $n$.

Pelinovsky also asked "Does the "embedded solitons" disperse away in the time
evolution?" Addressing this question amounts to understanding the behavior of
$\psi\left(  x,t,\omega_{n}\right)  $ in the asymptotic regime around the
"positon characteristic" $x=-12\omega_{n}^{2}t$ as $t\rightarrow\infty$ (see
our \cite{RybPosBS} for more detail). The main challenge is that $\left\vert
R\left(  \omega_{n}\right)  \right\vert =1$ and the powerful nonlinear
steepest descend method due to Deift-Zhou needs a serious modification, which
to the best of our knowledge is only available in the case when $\left\vert
R\left(  0\right)  \right\vert =1$ but less than $1$ otherwise
\cite{DeiftVenZhou94}. Note that in the NLS context and by totally different
from \cite{DeiftVenZhou94} methods a treatment of the case $\left\vert
R\left(  \omega\right)  \right\vert =1$ was recently offered by Budylin
\cite{Budylin20}. A KdV adaptation of his techniques should yield the answer
to the question if embedded solitons (bounded positons) will disperse away or
not (i.e. present a KdV \emph{breather}).

\begin{remark}
Embedded bound states may not be created on a short-range background. Indeed
we must have at least one real point $\omega\neq0$ where $\left\vert R\left(
\omega\right)  \right\vert =1$.
\end{remark}

\begin{remark}
If $q$ also has a Jost solution at $-\infty$ for a.e. $\operatorname{Im}k=0$
then the transmission coefficient $T\left(  k\right)  $ is well-defined. It
can be easily shown that%
\[
T_{+N}\left(  k\right)  =T\left(  k\right)  .
\]
I.e., our binary Darboux transformation preserve both $R$ and $T$. It follows
from the conservation laws then that%
\[
\int_{-\infty}^{\infty}q_{+N}\left(  x,t\right)  \mathrm{d}x=\int_{-\infty
}^{\infty}q\left(  x,t\right)  \mathrm{d}x,
\]%
\[
\int_{-\infty}^{\infty}q_{+N}\left(  x,t\right)  ^{2}\mathrm{d}x=\int%
_{-\infty}^{\infty}q\left(  x,t\right)  ^{2}\mathrm{d}x.
\]

\end{remark}

\section{Removing embedded bound states\label{Main results 2}}

In this section we show that we can as well remove (or rather pare) embedded
bound states.

\begin{theorem}
[paring embedded eigenvalues]\label{main thm 2}Assume Hypothesis \ref{hyp1.1}.
Let $D$ be the set of embedded bound states of $\mathbb{L}_{q}$ and
$D_{0}=\left\{  \omega_{n}^{2},1\leq n\leq N<\infty\right\}  $ be its subset
such that $\omega_{n}^{2}$ are simple and $R\left(  k\right)  $ defined by
(\ref{R}) and $\left(  k-\omega_{n}\right)  \psi\left(  x,k\right)  $ are
functions continuous in $\operatorname{Im}k=0$ at $\omega_{n}$. If $\left\{
\phi_{n},1\leq n\leq N\right\}  $ is an orthonormal set of real eigenfunction
then the set of embedded eigenvalues of the potential%
\[
q_{-N}\left(  x\right)  =q\left(  x\right)  -2\partial_{x}^{2}\log\det\left(
\mathbf{I}-\mathbf{G}_{-}\left(  x\right)  \right)  ,
\]
where $\mathbf{G}_{-}$ is the Gram matrix defined by%
\[
\mathbf{G}_{-}:=\left(  \int_{-\infty}^{x}\phi_{n}\left(  s\right)  \phi
_{m}\left(  s\right)  \mathrm{d}s\right)  ,
\]
coincides with $D\diagdown D_{0}$.
\end{theorem}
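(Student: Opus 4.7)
The plan is to mirror the proof of Theorem \ref{MainThm}, but with reversed signs so that the gauge transformation \emph{cancels} the embedded poles of $\psi$ instead of creating them. Since each $\omega_n^2 \in D_0$ is a simple embedded eigenvalue with $R(k)$ continuous and $(k-\omega_n)\psi(x,k)$ continuous at $\omega_n$, the embedded pole condition (\ref{embedded pole cond}) is available with some norming constants $\alpha_n^2>0$. By Lemma \ref{lemma on Weyl sltin} applied on the right half line and the analogous argument on the left, every real $L^2$ eigenfunction of $\mathbb{L}_q$ at $\omega_n^2$ must be proportional to the real Weyl combination $2\operatorname{Re}[R(\omega_n)^{1/2}\psi(x,\omega_n)]$; orthonormality of $\{\phi_n\}$ then fixes the scalars, and in particular the $\phi_n$ are real solutions of $\mathbb{L}_q u = \omega_n^2 u$ with precisely the $L^2(-\infty)$ and oscillatory $+\infty$ behavior (\ref{cos}) used in Theorem \ref{MainThm}.

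Apply Lemma \ref{lemma} with $a_n(x)=-y_n(x)$ and $f_n(x,k)=\phi_n(x)/(k^2-\omega_n^2)$ to produce the candidate pair
\begin{align*}
\varphi_{-N}(x,k) &= \varphi(x,k) - \sum_{m=1}^{N} y_m(x)\,\frac{W(\varphi(x,k),\phi_m(x))}{k^2-\omega_m^2},\\
\psi_{-N}(x,k)   &= \psi(x,k)   - \sum_{m=1}^{N} y_m(x)\,\frac{W(\psi(x,k),\phi_m(x))}{k^2-\omega_m^2}.
\end{align*}
Determine $(y_n)$ by demanding that $\psi_{-N}$ be regular at each $k=\omega_n$, so that the diagonal Green's function $g_{-N}=-\varphi_{-N}\psi_{-N}/(2\mathrm{i}k)$ loses its embedded pole at $\omega_n^2$. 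Extracting the principal part at $k=\omega_n$, using (\ref{embedded pole cond}), (\ref{Wr id}) and the identification $\phi_n \propto R(\omega_n)^{-1/2}\varphi(x,\omega_n)$ to rewrite $W(\psi,\phi_m)/(k^2-\omega_m^2)$ at $k=\omega_n$ as $\int_{-\infty}^{x}\phi_n\phi_m\,\mathrm{d}s$ (modulo a scalar), produces the linear system
\[
\mathbf{y}(x)\bigl(\mathbf{I}-\mathbf{G}_{-}(x)\bigr) \;=\; (\text{data built from }\phi_n),
\]
whose coefficient matrix is exactly $\mathbf{I}-\mathbf{G}_{-}(x)$.

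The invertibility of $\mathbf{I}-\mathbf{G}_{-}(x)$ for every finite $x$ follows from the orthonormality of $\{\phi_n\}$ in $L^{2}(\mathbb{R})$: for any nonzero real vector $\mathbf{c}$,
\[
\mathbf{c}^{T}\mathbf{G}_{-}(x)\mathbf{c}=\int_{-\infty}^{x}\Bigl(\sum_{n}c_n\phi_n(s)\Bigr)^{2}\mathrm{d}s \;<\; \int_{-\infty}^{\infty}\Bigl(\sum_{n}c_n\phi_n(s)\Bigr)^{2}\mathrm{d}s=|\mathbf{c}|^{2},
\]
so $\mathbf{G}_{-}(x)<\mathbf{I}$ strictly. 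Once $(y_n)$ is defined, the potential $q_{-N}$ is recovered from the large-$k$ asymptotics (\ref{asympt of Jost}) for $\psi_{-N}$: the same Jacobi-formula computation performed in Theorem \ref{MainThm} applied to $\mathbf{I}-\mathbf{G}_{-}$ in place of $\mathbf{I}+\mathbf{G}_{+}$ produces precisely $q_{-N}(x)=q(x)-2\partial_x^{2}\log\det(\mathbf{I}-\mathbf{G}_{-}(x))$. A direct substitution (as in Theorem \ref{MainThm}) verifies that $\varphi_{-N}$, $\psi_{-N}$ solve $\mathbb{L}_{q_{-N}}u=k^{2}u$ and retain the Weyl property at $\mp\infty$. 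Because the gauge transformation only modifies the pole structure at the $\omega_n \in D_0$ and does not touch the other embedded poles of $\psi$, the embedded spectrum of $\mathbb{L}_{q_{-N}}$ coincides with $D\setminus D_{0}$; preservation of the isolated negative bound state data is verified by the same Wronskian argument used in Corollary \ref{Corollary on norming consts}.

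The main obstacle is the degeneration $\mathbf{I}-\mathbf{G}_{-}(x)\to\mathbf{0}$ as $x\to+\infty$, which forces $\|(\mathbf{I}-\mathbf{G}_{-})^{-1}\|$ to blow up and hence each $y_n(x)$ to become large. Showing that these individual growths nevertheless conspire, via (\ref{cos}) and the structure of $W(\psi,\phi_m)$ at $+\infty$, so that $\psi_{-N}(x,k)\sim \mathrm{e}^{\mathrm{i}kx}$ still holds (equivalently, $\psi_{-N}\in L^{2}(+\infty)$ for $\operatorname{Im}k>0$), is the delicate step that has no exact analog in the inserting case where $\mathbf{I}+\mathbf{G}_{+}$ only grew. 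The cleanest way to control this is probably to break the removal into a chain of single-eigenvalue paring steps, analogous to (\ref{chain}), in which at each step only one $\phi_n$ is removed and the asymptotic behavior can be tracked directly; this also clarifies that at every step the embedded eigenvalues outside the currently processed one are left untouched.
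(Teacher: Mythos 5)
Your overall route is the paper's: gauge-transform $(\varphi,\psi)$ with $f_n(x,k)=\phi_n(x)/(k^2-\omega_n^2)$, choose $(y_n)$ so that $\psi_{-N}$ loses its embedded pole at each $\omega_n$, identify the coefficient matrix of the resulting linear system with $\mathbf{I}-\mathbf{G}_-(x)$, get invertibility from orthonormality, and recover $q_{-N}$ from the large-$k$ asymptotics via the Jacobi determinant computation. However, two of your intermediate steps are wrong as written, and the second one is where the theorem's specific structure actually comes from. First, the identification $\phi_n\propto 2\operatorname{Re}\left[R(\omega_n)^{1/2}\psi(x,\omega_n)\right]$ is meaningless in the removal setting: here $\psi(x,k)$ has an embedded simple pole at $\omega_n$ (that is precisely what makes $\omega_n^2$ an embedded bound state), so $\psi(x,\omega_n)$ does not exist. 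The paper instead uses the continuity of $R(k)$ and of $(k-\omega_n)\psi(x,k)$ together with (\ref{basic scatt identity}) to conclude that $\varphi(x,\omega_n+\mathrm{i}0)$ is finite and nonzero, and then identifies $\phi_n$, up to scalars, with $\varphi(x,\omega_n+\mathrm{i}0)$ (via Lemma \ref{lemma on Weyl sltin} and simplicity) and with $\operatorname{Res}_{k=\omega_n}\psi(x,k)$. Second, your proposal to rewrite $W(\psi(x,k),\phi_m(x))/(k^2-\omega_m^2)$ as $\int_{-\infty}^{x}\phi_n\phi_m\,\mathrm{d}s$ "modulo a scalar" cannot produce the system you then assert: a left-sided integral yields a coefficient matrix built from $\mathbf{G}_-(x)$ alone, with no identity matrix. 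Since $\psi$ is Weyl at $+\infty$ and $\phi_m\in L^2(+\infty)$, the identity (\ref{Wr id}) gives the \emph{right-sided} representation $W(\psi,\phi_m)/(k^2-\omega_m^2)=-\int_x^{\infty}\psi(s,k)\phi_m(s)\,\mathrm{d}s$; taking residues at $\omega_n$ and using $\operatorname{Res}_{\omega_n}\psi\propto\phi_n$ gives $\sum_m y_m\int_x^{\infty}\phi_m\phi_n\,\mathrm{d}s=\phi_n$, and it is exactly the orthonormality identity $\int_x^{\infty}\phi_m\phi_n\,\mathrm{d}s=\delta_{mn}-\int_{-\infty}^{x}\phi_m\phi_n\,\mathrm{d}s$ that manufactures the $\mathbf{I}$ in $\mathbf{I}-\mathbf{G}_-$. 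As submitted, your derivation of the key linear system does not close.

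On the other hand, your closing concern about the degeneration $\mathbf{I}-\mathbf{G}_-(x)\to\mathbf{0}$ as $x\to+\infty$, and the attendant question of whether $\psi_{-N}$ retains the Weyl property at $+\infty$, is a legitimate point that the paper's own proof does not address either: the paper stops at the assertion that $\varphi_{-N},\psi_{-N}$ are Weyl at $\mp\infty$ and regular at each $\omega_n$, hence so is the diagonal Green's function. Your suggestion to factor the removal into a chain of single-eigenvalue steps is a reasonable way to attack that, but you do not carry it out, so on this point you are no worse off (and no better off) than the printed argument.
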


\begin{proof}
Our arguments go along the same lines with those in the proof of Theorem
\ref{MainThm}. Consider%
\[
\varphi_{-N}\left(  x,k\right)  :=\varphi\left(  x,k\right)  +%
{\displaystyle\sum\limits_{n=1}^{N}}
y_{n}\left(  x\right)  \dfrac{W\left(  \varphi\left(  x,k\right)  ,\phi
_{n}\left(  x\right)  \right)  }{k^{2}-\omega_{n}^{2}},
\]%
\[
\psi_{-N}\left(  x,k\right)  :=\psi\left(  x,k\right)  +%
{\displaystyle\sum\limits_{n=1}^{N}}
y_{n}\left(  x\right)  \dfrac{W\left(  \psi\left(  x,k\right)  ,\phi
_{n}\left(  x\right)  \right)  }{k^{2}-\omega_{n}^{2}},
\]
where $\varphi,\psi$ are some Weyl solutions at $\mp\infty$ and $y_{n}$ are
real functions to be determined. By the Wronskian identity (\ref{Wr id})
($\operatorname{Im}k>0$)%
\begin{equation}
\varphi_{-N}\left(  x,k\right)  :=\varphi\left(  x,k\right)  +%
{\displaystyle\sum\limits_{n=1}^{N}}
y_{n}\left(  x\right)  \int_{-\infty}^{x}\varphi\left(  s,k\right)  \phi
_{n}\left(  s\right)  \mathrm{d}s, \label{fi-N}%
\end{equation}%
\begin{equation}
\psi_{-N}\left(  x,k\right)  :=\psi\left(  x,k\right)  -%
{\displaystyle\sum\limits_{n=1}^{N}}
y_{n}\left(  x\right)  \int_{x}^{\infty}\psi\left(  s,k\right)  \phi
_{n}\left(  s\right)  \mathrm{d}s. \label{ksi-N}%
\end{equation}
As before, $\psi\left(  x,k\right)  $ is chosen to be a Jost solution at
$+\infty$ and%
\begin{equation}
\varphi\left(  x,k\right)  =\overline{\psi\left(  x,k\right)  }+R\left(
k\right)  \psi\left(  x,k\right)  \label{basic scat}%
\end{equation}
defines a Weyl solution at $-\infty$ for a.e. $\operatorname{Im}k=0$. Since
$\omega_{n}^{2}$ is a bound state of $\mathbb{L}_{q}$ we conclude that the
product $\varphi\left(  x,k\right)  \psi\left(  x,k\right)  $ has an embedded
simple pole at $\omega_{n}$. On the other hand, since $\psi\left(  x,k\right)
$ also has a embedded simple pole at $\omega_{n}$, it follows from
(\ref{basic scat}) and continuity that $\varphi\left(  x,k+\mathrm{i}0\right)
$ must be well defined at $\omega_{n}$ and different from zero. Since
$\omega_{n}^{2}$ is a simple eigenvalue, by Lemma \ref{lemma on Weyl sltin}
$\varphi\left(  x,\omega_{n}+\mathrm{i}0\right)  $ and $\phi_{n}$ are linearly
dependent and thus $\varphi_{-N}\left(  x,k\right)  $ is well-defined at
$\omega_{n}$.

Turn to $\psi_{-N}$. From (\ref{ksi-N}) one has%
\[
\operatorname*{Res}_{\omega_{n}}\psi_{-N}\left(  x,k\right)  :=\psi_{n}\left(
x\right)  -%
{\displaystyle\sum\limits_{m=1}^{N}}
y_{m}\left(  x\right)  \int_{x}^{\infty}\psi_{n}\left(  s\right)  \phi
_{n}\left(  s\right)  \mathrm{d}s,
\]
where%
\[
\psi_{n}\left(  x\right)  :=\operatorname*{Res}_{k=\omega_{n}}\psi\left(
x,k\right)  ,
\]
is also an $L^{2}$ eigenfunction associated with $\omega_{n}^{2}$. Since we
want $\psi_{-N}\left(  x,k\right)  $ to be regular at $\omega_{n}$, it follows
that%
\[%
{\displaystyle\sum\limits_{m=1}^{N}}
y_{m}\left(  x\right)  \int_{x}^{\infty}\psi_{n}\left(  s\right)  \phi
_{m}\left(  s\right)  \mathrm{d}s=\psi_{n}\left(  x\right)  .
\]
Since $\omega_{n}^{2}$ is a simple eigenvalue, $\psi_{n}$ is proportional to
$\phi_{n}$ and we arrive at the linear system%
\begin{equation}%
{\displaystyle\sum\limits_{m=1}^{N}}
y_{m}\left(  x\right)  \int_{x}^{\infty}\phi_{m}\left(  s\right)  \phi
_{n}\left(  s\right)  \mathrm{d}s=\phi_{n}\left(  x\right)  \label{system 1}%
\end{equation}
in $\left(  y_{n}\right)  $. Its matrix%
\[
\left(  \int_{x}^{\infty}\phi_{m}\left(  s\right)  \phi_{n}\left(  s\right)
\mathrm{d}s\right)  =\mathbf{I}-\left(  \int_{-\infty}^{x}\phi_{m}\left(
s\right)  \phi_{n}\left(  s\right)  \mathrm{d}s\right)
\]
is Gram (in fact, positive definite) and hence the system has a unique
solution for any finite $x$. Thus we have constructed two solutions
$\varphi_{-N}\left(  x,k\right)  $, $\psi_{-N}\left(  x,k\right)  $ which are
Weyl at $\mp\infty$ respectively and are regular at $\omega_{n}$ and hence so
is the diagonal Green's function. Therefore, $\omega_{n}^{2}$ is no longer an
embedded bound state.
\end{proof}

\begin{remark}
As is well-known, embedded bound states are unstable and may turn into
resonances under an arbitrarily small perturbation \cite{CruzSampedroetal2002}%
. Theorem \ref{main thm 2} offers an explicit perturbation that purges only
targeted embedded bound states.
\end{remark}

\section{Explicit examples\label{Sect: example}}

In this section we work out an explicit example that clearly demonstrates how
Theorems \ref{MainThm} and \ref{main thm 2} apply shading, at the same time,
some light on the nature of the conditions. We only consider the case of a
single resonance $\omega$. Without loss of generality, we can set $\omega=1$.
Our example is based on a construction from our \cite{RybPosBS}. Let%
\begin{equation}
q_{0}\left(  x\right)  =-2\partial_{x}^{2}\log\tau\left(  x\right)  ,
\label{q0}%
\end{equation}
where (called the Hirota tau-function)%
\begin{equation}
\tau\left(  x\right)  =1+2\rho\int_{0}^{\left\vert x\right\vert }\sin
^{2}s\ \mathrm{d}s=1+\rho x-\left(  \rho/2\right)  \sin2x \label{tau0}%
\end{equation}
with some $\rho>0$, and consider%
\begin{equation}
q\left(  x\right)  =\left\{
\begin{array}
[c]{cc}%
q_{0}\left(  x\right)  , & x<0\\
0, & x\geq0
\end{array}
\right.  . \label{our Q}%
\end{equation}
One can easily see that $q\left(  x\right)  $ is continuous (but not
continuously differentiable) and
\begin{equation}
q\left(  x\right)  \sim-\frac{4\sin2x}{x}\ ,\ x\rightarrow-\infty.
\label{Q asym}%
\end{equation}
Thus, $q\left(  x\right)  $ is not short-range at $-\infty$ but in $L^{2}$ and
it is certainly subject to Hypothesis \ref{hyp1.1}. The main feature of
$q\left(  x\right)  $ is that $\mathbb{L}_{q}$ admits explicit spectral and
scattering theories. In particular, for the transmission $T$ and right/left
reflection $R,L$ coefficients we have \cite{RybPosBS}
\begin{equation}
T\left(  k\right)  =\frac{P\left(  k\right)  }{P\left(  k\right)
+\mathrm{i}\rho},\ \ \ R\left(  k\right)  =\frac{-\mathrm{i}\rho}{P\left(
k\right)  +\mathrm{i}\rho}=L\left(  k\right)  , \label{TRL}%
\end{equation}
where $P\left(  k\right)  :=k^{3}-k$. The right Jost solution (recalling our
agreement to drop $+$ sing) is apparently%
\begin{equation}
\psi\left(  x,k\right)  =\mathrm{e}^{\mathrm{i}kx},x\geq0. \label{psi +}%
\end{equation}
For the left Jost solution we have \cite{RybPosBS}
\begin{equation}
\psi_{-}\left(  x,k\right)  =\mathrm{e}^{-\mathrm{i}kx}-\left(  \frac
{\mathrm{e}^{-\mathrm{i}\left(  k+1\right)  x}}{k+1}-\frac{\mathrm{e}%
^{-\mathrm{i}\left(  k-1\right)  x}}{k-1}\right)  \frac{\rho\sin x}%
{\tau\left(  x\right)  },\ \ \ x<0, \label{left jost}%
\end{equation}
where $\tau\left(  x\right)  $ is given by (\ref{tau0}). Apparently,
$\psi\left(  x,k\right)  $ and $R\left(  k\right)  $ are analytic at $k=1$ and
hence condition 2 of Theorem \ref{MainThm} is satisfied. Since $\left(
k-1\right)  \psi_{-}\left(  x,k\right)  $ is also a solution, we immediately
conclude from (\ref{left jost}) that%
\begin{equation}
\varphi_{0}\left(  x\right)  =\frac{\sin x}{\tau\left(  x\right)  }=\frac{\sin
x}{1+2\rho\int_{0}^{\left\vert x\right\vert }\sin^{2}s\ \mathrm{d}%
s},\ \ \ \ x<0, \label{square integrable}%
\end{equation}
is clearly an $L^{2}\left(  -\infty\right)  $ solution and therefore condition
1 of Theorem \ref{MainThm} is also satisfied. Thus, Theorem \ref{MainThm}
applies to our $q\left(  x\right)  $. We do not need to know $\varphi
_{0}\left(  x\right)  $ for $x\geq0$ yet (will be explicitly found later) but
it is clear already that $+1$ is not a positive eigenvalue since a linear
combination of plane waves $\mathrm{e}^{\pm\mathrm{i}x}$ is never in
$L^{2}\left(  +\infty\right)  $. Thus $+1$ is a resonance of $\mathbb{L}_{q}$.
This should also explain why we call condition 1 in Theorem \ref{MainThm} resonance.

Observe that $\varphi_{0}\left(  0\right)  =0$ and hence $+1$ is a positive
bound state of $\mathbb{L}_{q}^{D}$ on $L^{2}\left(  \mathbb{R}_{-}\right)  $
with a Dirichlet condition at $0$.

Let us now apply Theorem \ref{MainThm} to our $q\left(  x\right)  $. Equation
(\ref{q+1}) reads
\begin{equation}
q_{+1}\left(  x\right)  =q\left(  x\right)  -2\partial_{x}^{2}\log\left(
1+\alpha^{2}\int_{-\infty}^{x}\phi\left(  s\right)  ^{2}\mathrm{d}s\right)  ,
\label{q+1 in example}%
\end{equation}
where $\phi\left(  s\right)  =-\operatorname{Re}\left[  R\left(  1\right)
^{1/2}\psi\left(  s,1\right)  \right]  $. Note that we chose minus sign for
convenience. Evaluate
\[
\phi\left(  s\right)  =-\lim\operatorname{Re}\left[  R\left(  k\right)
^{1/2}\psi\left(  s,k\right)  \right]  ,\ \ \ k\rightarrow
1,\ \ \ \operatorname{Im}k=0.
\]
It follows from (\ref{psi +}) and (\ref{TRL}) that for $s\geq0$%
\begin{equation}
\phi\left(  s\right)  =-\operatorname{Re}\left(  \mathrm{ie}^{\mathrm{i}%
s}\right)  =\sin s,\ \ \ s\geq0. \label{s>0}%
\end{equation}
The case $s<0$ needs some work as we do not know $\psi\left(  s,k\right)  $ on
$\mathbb{R}_{-}$ yet. We compute it from the left basic scattering relation
(cf. (\ref{basic scatt identity}))%
\[
T\left(  k\right)  \psi\left(  s,k\right)  =\overline{\psi_{-}\left(
s,k\right)  }+L\left(  k\right)  \psi_{-}\left(  s,k\right)
,\ \ \ \operatorname{Im}k=0.
\]
It follows from (\ref{TRL}) that $L\left(  k\right)  =T\left(  k\right)  -1$
and hence%
\begin{align*}
\psi\left(  s,k\right)   &  =\frac{1}{T\left(  k\right)  }\left[
\overline{\psi_{-}\left(  s,k\right)  }+\left(  T\left(  k\right)  -1\right)
\psi_{-}\left(  s,k\right)  \right] \\
&  =\psi_{-}\left(  s,k\right)  +\frac{\overline{\psi_{-}\left(  s,k\right)
}-\psi_{-}\left(  s,k\right)  }{T\left(  k\right)  }\\
&  =\psi_{-}\left(  s,k\right)  +\frac{P\left(  k\right)  +\mathrm{i}\rho
}{P\left(  k\right)  }\left[  \overline{\psi_{-}\left(  s,k\right)  }-\psi
_{-}\left(  s,k\right)  \right]  .
\end{align*}
Thus%
\begin{equation}
\psi\left(  s,k\right)  =\overline{\psi_{-}\left(  s,k\right)  }+\frac{2\rho
}{P\left(  k\right)  }\operatorname{Im}\psi_{-}\left(  s,k\right)  ,\ \ \ s<0.
\label{psi plus}%
\end{equation}
Observe that it is not clear why (\ref{psi plus}) is regular at $k=1$ where
$P\left(  k\right)  $ vanishes (but the general theory says that it is the
case). It is an amusing exercise to demonstrate it directly. Since we only
need the real part of it our computation will be easy:%
\begin{align}
&  \operatorname{Re}\left[  R\left(  1\right)  ^{1/2}\psi\left(  s,1\right)
\right] \label{limits}\\
&  =\lim_{k\rightarrow1}\operatorname{Re}\left[  R\left(  k\right)
^{1/2}\overline{\psi_{-}\left(  s,k\right)  }\right]  +2\rho\lim
_{k\rightarrow1}\frac{\operatorname{Re}R\left(  k\right)  ^{1/2}}{P\left(
k\right)  }\lim_{k\rightarrow1}\operatorname{Im}\psi_{-}\left(  s,1\right)
.\nonumber
\end{align}
Evaluate each of these limits separately. We start with the observation that
as $k\rightarrow1$%
\[
R\left(  k\right)  =-1-\frac{\mathrm{i}}{\rho}P\left(  k\right)  \sim
-1-\frac{2\mathrm{i}}{\rho}\left(  k-1\right)  ,
\]
and hence along the real line%
\[
\operatorname{Re}R\left(  k\right)  ^{1/2}\sim\cos\left(  \frac{\pi}{2}%
-\frac{k-1}{\rho}\right)  =\sin\frac{k-1}{\rho},\ \ \ k\rightarrow1.
\]
We now immediately see that%
\begin{equation}
\lim_{k\rightarrow1}\frac{\operatorname{Re}R\left(  k\right)  ^{1/2}}{P\left(
k\right)  }=\frac{1}{\rho}. \label{lim 2}%
\end{equation}
It follows from (\ref{left jost}) and (\ref{square integrable}) that for
$\ s<0$%
\[
\psi_{-}\left(  s,k\right)  =\mathrm{e}^{-\mathrm{i}ks}-\rho\left(
\frac{\mathrm{e}^{-\mathrm{i}\left(  k+1\right)  s}}{k+1}-\frac{\mathrm{e}%
^{-\mathrm{i}\left(  k-1\right)  s}}{k-1}\right)  \varphi_{0}\left(  s\right)
\]
and we then have%
\begin{align*}
\operatorname{Im}\psi_{-}\left(  s,1\right)   &  =-\sin s-\frac{\rho}%
{2}\left(  2s-\sin2s\right)  \varphi_{0}\left(  s\right)  \text{
\ \ ((\ref{square integrable}) and (\ref{tau0}))}\\
&  =-\sin s-\frac{\rho}{2}\left(  2s-\sin2s\right)  \frac{\sin s}{1-\rho
s+\left(  \rho/2\right)  \sin2s}\\
&  =-\frac{\sin s}{1-\rho s+\left(  \rho/2\right)  \sin2s}=-\varphi_{0}\left(
s\right)  ,\ \ \ s<0.
\end{align*}
Thus $\operatorname{Im}\psi_{-}\left(  s,1\right)  $ is continuous at $k=1$
and
\begin{equation}
\operatorname{Im}\psi_{-}\left(  s,1\right)  =-\varphi_{0}\left(  s\right)
,\ \ \ s<0, \label{lim 3}%
\end{equation}
which also implies that for the first limit on the right hand side of
(\ref{limits}) one must have%
\begin{equation}
\lim_{k\rightarrow1}\operatorname{Re}\left[  R\left(  k\right)  ^{1/2}%
\overline{\psi_{-}\left(  s,k\right)  }\right]  =0. \label{lim 1}%
\end{equation}
Substituting (\ref{lim 2})-(\ref{lim 1}) into (\ref{limits}), we arrive at%
\begin{equation}
\operatorname{Re}\left[  R\left(  1\right)  ^{1/2}\psi\left(  s,1\right)
\right]  =-\varphi_{0}\left(  s\right)  ,\ \ \ s<0. \label{s<0}%
\end{equation}
Combining (\ref{s>0}) with (\ref{s<0}) we finally have%
\[
\phi\left(  s\right)  =-2\operatorname{Re}\left[  R\left(  1\right)
^{1/2}\psi\left(  s,1\right)  \right]  =2\left\{
\begin{array}
[c]{cc}%
\varphi_{0}\left(  s\right)  , & s<0\\
\sin s, & s\geq0
\end{array}
\right.  .
\]
Thus, $\phi$ is a solution that square integrable at $-\infty$ and
proportional to the sine function on $\mathbb{R}_{+}$. We are now able to find
$q_{+1}\left(  x\right)  $ explicitly by (\ref{q+1 in example}). Indeed, for
$x<0$%
\begin{align}
I\left(  x\right)   &  :=\int_{-\infty}^{x}\phi\left(  s\right)
^{2}\mathrm{d}s=4\int_{-\infty}^{x}\varphi_{0}\left(  s\right)  ^{2}%
\mathrm{d}s\label{I x<0}\\
&  =\int_{-\infty}^{x}\frac{4\sin^{2}s\ \mathrm{d}s}{\left(  1+2\rho\int%
_{0}^{-s}\sin^{2}t\ \mathrm{d}t\right)  }=-\frac{2}{\rho}\int_{-\infty}%
^{x}\frac{\mathrm{d}\tau\left(  s\right)  }{\tau\left(  s\right)  ^{2}}%
=\frac{2}{\rho}\frac{1}{\tau\left(  x\right)  }.\nonumber
\end{align}
Note that, in particular,%
\[
\int_{-\infty}^{0}\phi\left(  s\right)  ^{2}\mathrm{d}s=\frac{2}{\rho}.
\]
For $x\geq0$%
\begin{align}
I\left(  x\right)   &  =\int_{-\infty}^{0}\phi\left(  s\right)  ^{2}%
\mathrm{d}s+4\int_{0}^{x}\sin^{2}s\ \mathrm{d}s\label{I x>0}\\
&  =\frac{2}{\rho}\left(  1+2\rho\int_{0}^{x}\sin^{2}s\ \mathrm{d}s\right)
=\frac{2}{\rho}\tau\left(  x\right)  .\nonumber
\end{align}
Substituting (\ref{I x<0}) and (\ref{I x>0}) into (\ref{q+1 in example})
yields%
\begin{align*}
q_{+1}\left(  x\right)   &  =q\left(  x\right)  -2\partial_{x}^{2}\log\left(
1+\alpha^{2}I\left(  x\right)  \right) \\
&  =q\left(  x\right)  -2\partial_{x}^{2}\log\left(  1+\frac{2\alpha^{2}}%
{\rho}\left\{
\begin{array}
[c]{cc}%
1/\tau\left(  x\right)  , & x<0\\
\tau\left(  x\right)  , & x\geq0
\end{array}
\right.  \right)  .
\end{align*}
This formula can be simplified nicely if we recall what our seed potential
$q\left(  x\right)  $ is. Indeed, from (\ref{q0})-(\ref{our Q}) we have for
$x<0$%
\begin{align*}
q_{+1}\left(  x\right)   &  =-2\partial_{x}^{2}\log\tau\left(  x\right)
-2\partial_{x}^{2}\log\left(  1+\frac{2\alpha^{2}}{\rho}1/\tau\left(
x\right)  \right) \\
&  =-2\partial_{x}^{2}\log\left(  1+\frac{\rho}{2\alpha^{2}}\tau\left(
x\right)  \right)
\end{align*}
and for $x>0$%
\begin{align*}
q_{+1}\left(  x\right)   &  =-2\partial_{x}^{2}\log\left(  1+1/\tau\left(
x\right)  \right) \\
&  =-2\partial_{x}^{2}\log\left(  1+\frac{2\alpha^{2}}{\rho}\tau\left(
x\right)  \right)  ,
\end{align*}
which can be conveniently put in one formula%
\begin{align}
q_{+1}\left(  x\right)   &  =-2\partial_{x}^{2}\log\left(  1+\left(
\frac{\rho}{2\alpha^{2}}\right)  ^{\pm1}\tau\left(  x\right)  \right)
,\ \ \pm x>0,\label{q+1 final}\\
\tau\left(  x\right)   &  =1+2\rho\int_{0}^{\left\vert x\right\vert }\sin
^{2}s\mathrm{d}s.\nonumber
\end{align}
By Theorem \ref{MainThm}, the Schrodinger operator with the potential given by
(\ref{q+1 final}) has an embedded eigenvalue $+1$.

There is a point in analyzing (\ref{q+1 final}).

\begin{itemize}
\item One easily sees that
\[
q_{+1}\left(  x\right)  \sim-4\ \dfrac{\sin2x}{x},\ \left\vert x\right\vert
\rightarrow\infty.
\]
Thus, all $q_{+1}$ share same large $x$ asymptotics. Recall, that the seed
potential $q$ has this asymptotic behavior only at $-\infty$ and thus $q_{+1}$
is long-range at $+\infty$ as well. This agrees, of course, with
(\ref{difference}) with $A=-4$ and $\delta=0$.

\item By Corollary \ref{Corollary on norming consts}, the family of potentials
given by (\ref{q+1 final}) share the same scattering quantities (\ref{TRL})
providing yet another example of the failure of the classical inverse
scattering in the long-range setting. Recall, that in the short-range
scattering $\left\vert R\left(  k\right)  \right\vert <1$ for $k\neq0$, which
is clearly violated in our example as $R\left(  \pm1\right)  =-1$.

\item The function (\ref{q+1 final}) is even if and only if $\rho=2\alpha^{2}%
$. In this case,
\begin{equation}
q_{+1}^{\operatorname*{Sym}}\left(  x\right)  =-2\partial_{x}^{2}\log\left(
1+\rho\int_{0}^{\left\vert x\right\vert }\sin^{2}s\mathrm{d}s\right)  ,
\label{q sym}%
\end{equation}
which is the main example of an explicit Wigner-von Neumann type potential
studied in \cite{RybPosBS} that has an embedded bound state $+1$. Note that
there is no value of $\alpha$ that produces odd $q_{+1}\left(  x\right)  $.

\item Turn now to the eigenfunction of $+1$. The system (\ref{linear sys})
simplifies to the single equation%
\[
\left(  1+\alpha^{2}\int_{-\infty}^{x}\phi\left(  s\right)  ^{2}%
\mathrm{d}s\right)  y=\alpha\phi\left(  x\right)
\]
for the eigenfunction$~y$:%
\begin{equation}
y\left(  x\right)  =\frac{\alpha\phi\left(  x\right)  }{1+\alpha^{2}%
\int_{-\infty}^{x}\phi\left(  s\right)  ^{2}\mathrm{d}s}, \label{y}%
\end{equation}
which, as one can easily compute, has $L^{2}$ norm $1$. It is worth noticing
that as apposed to the right Jost solution $\psi\left(  x,k\right)  $
corresponding to the seed potential $q\left(  x\right)  $, by (\ref{Ksi+N})
the transformed Jost solution%
\begin{align}
&  \psi_{+1}\left(  x,k\right) \label{transformed jost}\\
&  =\mathrm{e}^{\mathrm{i}kx}\left\{  1+\left(  \frac{\mathrm{e}^{\mathrm{i}%
x}}{k+1}-\frac{\mathrm{e}^{-\mathrm{i}x}}{k-1}\right)  \frac{\alpha^{2}%
\phi\left(  x\right)  }{1+\alpha^{2}\int_{-\infty}^{x}\phi\left(  s\right)
^{2}\mathrm{d}s}\right\}  ,\ \ \ x\geq0,\nonumber
\end{align}
indeed has a simple poles at $k=\pm1$, as expected. It follows from
(\ref{ksi+N}) that\footnote{Without loss of generality we can always assume
that $\alpha>0$.}%
\begin{equation}
\alpha=\left\Vert \operatorname*{Res}_{k=1}\psi_{+1}\left(  \cdot,k\right)
\right\Vert . \label{alpha}%
\end{equation}
Recall that for the right norming constant of a negative bound state
$-\kappa^{2}$ of a generic potential we have
\[
c=\left\Vert \psi\left(  \cdot,\mathrm{i}\kappa\right)  \right\Vert ^{-1}.
\]
Comparing this with (\ref{alpha}) suggests a new definition for a right
norming constant of an embedded bound state (at least in the case of a single
embedded bound state).

\item Let us now briefly discuss how Theorem \ref{main thm 2}, removing
embedded bound states, applies to our example. For simplicity, we consider
$q_{+1}^{\operatorname*{Sym}}\left(  x\right)  $ defined by (\ref{q sym}) that
has an embedded bound state $+1$. Check the conditions of Theorem
\ref{main thm 2}. It follows from the general theory of Winger-von Neumann
type potentials (see e.g. \cite{Eastham1982}) that $+1$ is necessarily simple
eigenvalue. Indeed, for $k=1$ the Schrodinger equation has only one decaying
solution (the other solution is increasing). It follows from (\ref{R}) and
(\ref{transformed jost}) that $R\left(  k\right)  $ and $\left(  k-1\right)
\psi_{+1}\left(  x,k\right)  $ are both continuous (in fact, analytic) at
$k=\pm1$. Therefore, \ref{main thm 2} applies to our $q_{+1}%
^{\operatorname*{Sym}}\left(  x\right)  $. Performing computation similar to
given above, one concludes that the transformed potential $q_{-1}\left(
x\right)  $ indeed coincides with $q\left(  x\right)  $ given by (\ref{our Q}).

\item Finally, we turn to the time evolution $q_{+1}\left(  x,t\right)  $ of
$q_{+1}\left(  x\right)  $ under the KdV flow. Unfortunately, we no longer
have an explicit formula and it is unreasonable to expect one\footnote{Recall
that for singular positons such a formula does exist \cite{Mat02}}. Equation
(\ref{kdv sltn}) in our case reads%
\begin{align}
q_{+1}\left(  x,t\right)   &  =q\left(  x,t\right)  -\partial_{x}^{2}%
\log\left(  1+\alpha^{2}\int_{-\infty}^{x}\phi\left(  s,t\right)
^{2}\mathrm{d}s\right)  ,\label{q+1 time}\\
\phi\left(  s,t\right)   &  =2\operatorname{Im}\left[  \mathrm{e}%
^{4\mathrm{i}t}\psi\left(  s,t,1\right)  \right]  .\nonumber
\end{align}
Since $q\left(  x\right)  $ is supported on $\mathbb{R}_{-}$ and clearly
bounded below, the results of our \cite{GruRybSIMA15,GryRybBLMS20,RybNON2010}
apply and we have%
\begin{equation}
q\left(  x,t\right)  =-\partial_{x}^{2}\log\det\left(  I+\mathbb{H}\left(
x,t\right)  \right)  , \label{det}%
\end{equation}
where $\mathbb{H}\left(  x,t\right)  $ is a trace class singular integral
operator (in fact, Hankel) defined on the Hardy space $H^{2}$ of the upper
half plane by%
\[
\mathbb{H}\left(  x,t\right)  f\left(  k\right)  =-\int_{\mathbb{R}}\frac
{\Phi_{x,t}\left(  s\right)  f\left(  s\right)  }{s+k+\mathrm{i}0}%
\frac{\mathrm{d}s}{4\pi^{2}},\ \ \ f\in H^{2},
\]
where the entire function $\Phi_{x,t}$ is given by
\[
\Phi_{x,t}\left(  s\right)  :=\int_{\operatorname{Im}z=b}\frac{R\left(
z\right)  \mathrm{e}^{\mathrm{i}\left(  8z^{3}t+2zx\right)  }}{z-s}%
\mathrm{d}z,\ \ \ R\left(  z\right)  =\frac{-\mathrm{i}\rho}{z\left(
z^{2}-1\right)  +\mathrm{i}\rho}.
\]
Here the line of integration $\operatorname{Im}z=$ $b$ is chosen above the
(only one) imaginary pole of $R\left(  z\right)  $. The determinant in
(\ref{det}) is infinite for $t>0$ and so (\ref{det}) is only explicit at
$t=0$, where it returns the initial profile (\ref{our Q}). The right Jost
solution for $q\left(  x,t\right)  $ can then be found by%
\[
\psi\left(  x,t,k\right)  =\mathrm{e}^{\mathrm{i}kx}\left\{  1-\left(
I+\mathbb{H}\left(  x,t\right)  \right)  ^{-1}\mathbb{H}\left(  x,t\right)
1\right\}  ,
\]
where%
\[
\mathbb{H}\left(  x,t\right)  1=-\int_{\mathbb{R}}\frac{\Phi_{x,t}\left(
s\right)  }{s+k+\mathrm{i}0}\frac{\mathrm{d}s}{4\pi^{2}},
\]
which is well-defined even and in $H^{2}$ (though $1$ is not in $H^{2})$. This
step requires an inversion of the operator $I+\mathbb{H}\left(  x,t\right)  $,
which does not come with an explicit formula. The KdV solution $q_{+1}\left(
x,t\right)  $ is then computed by (\ref{q+1 time}). For $q_{+1}%
^{\operatorname*{Sym}}\left(  x\right)  $ a different derivation of
(\ref{q+1 time}) is obtained by different means in our \cite{RybPosBS}. The
first term in (\ref{q+1 time}), is nothing but the classical \emph{Dyson
formula}. It looks exactly like the one in the short-range case but of course
$q\left(  x,0\right)  =q\left(  x\right)  $ is not a short range potential at
$-\infty$. Thus $q\left(  x,t\right)  $ comes from data with the missing
embedded eigenvalue. On the other other hand, the second term in
(\ref{q+1 time}) takes into account the bound state $+1$. It resembles the
(singular) positon solution%
\begin{equation}
q_{\text{pos}}\left(  x,t\right)  =-2\partial_{x}^{2}\log\left\{
1+x+12t-\left(  1/2\right)  \sin2\left(  x+4t\right)  \right\}  .
\label{tau for positon}%
\end{equation}
Such solutions seem to have appeared first in the late 70s earlier 80s but a
systematic approach was developed a decade later by V. Matveev (see his 2002
survey \cite{Mat02}). Equation (\ref{tau for positon}) readily yields basic
properties of one-positon solutions considered in \cite{Mat02}. As a function
of the spatial variable $q_{\text{pos}}\left(  x,t\right)  $ has a double pole
real singularity which oscillates in the $1/2$ neighborhood of the moving
point $x=-12t-1$, and for a fixed $t\geq0$%
\begin{equation}
q_{\text{pos}}\left(  x,t\right)  \sim-4\frac{\sin2\left(  x+4t\right)  }%
{x},\ \ \ x\rightarrow\pm\infty\text{.} \label{posit asym}%
\end{equation}
Observe that%
\[
q_{\text{pos}}\left(  x,0\right)  =-2\partial_{x}^{2}\log\left(  1+x-\left(
1/2\right)  \sin2x\right)
\]
coincides on $\mathbb{R}_{+}$ with our%
\[
q_{+1}^{\operatorname*{Sym}}\left(  x\right)  =-2\partial_{x}^{2}\log\left(
1+\left(  \rho/2\right)  x--\left(  \rho/4\right)  \sin2x\right)  ,
\]
for $\rho=2$. But, of course, $q_{+1}^{\operatorname*{Sym}}\left(  x\right)  $
is bounded on $\mathbb{R}_{-}$ while $q_{\text{pos}}\left(  x,0\right)  $ is
not. Note also that the positon is somewhat similar to the soliton given by%
\begin{equation}
q_{\text{sol}}\left(  x,t\right)  =-2\partial_{x}^{2}\log\cosh\left(
x-4t\right)  \label{tau for soliton}%
\end{equation}
but its double pole singularity moves in the opposite direction (i.e. to
$-\infty$) three times as fast. We note that multi-positon as well as
soliton-positon solutions have been studied in great detail (see \cite{Mat02}
the references cited therein). We can also construct an explicit example of
bounded multi-positon solutions to demonstrate Theorem \ref{MainThm} for any
$N$. We hope to do this elsewhere.
\end{itemize}

\section{Acknowledgments}

We are grateful to Dmitry Pelinovsky and Rowan Killip for posing interesting
questions which in part motivated this paper. We would also like to thank
Alexander Minakov and Christian Remling for valuable insights and literature hints.

\end{document}